\def\checkmark{\tikz\fill[scale=0.4](0,.35) -- (.25,0) -- (1,.7) -- (.25,.15) -- cycle;} 
\begin{document}

\title{Prior Signal Editing \\for Graph Filter Posterior Fairness Constraints}

\author{\name Emmanouil Krasanakis \email maniospas@iti.gr\\
       \name Symeon Papadopoulos \email papadop@iti.gr\\
       \name Ioannis Kompatsiaris \email ikom@iti.gr\\
       \addr 
       Information Technologies Institute\\
       Centre for Research and Technology---Hellas\\
       57001 Thermi, Thessaloniki, Greece
       \AND
       \name Andreas Symeonidis \email asymeon@eng.auth.gr\\
       \addr 
       Department of Electrical and Computer Engineering\\
       Aristotle University of Thessaloniki\\
       54124, Thessaloniki, Greece
       }

\editor{This is an author preprint}

\maketitle

\begin{abstract}%
Graph filters are an emerging paradigm that systematizes information propagation in graphs as transformation of prior node values, called graph signals, to posterior scores. In this work, we study the problem of mitigating disparate impact, i.e. posterior score differences between a protected set of sensitive nodes and the rest, while minimally editing scores to preserve recommendation quality. To this end, we develop a scheme that respects propagation mechanisms by editing graph signal priors according to their posteriors and node sensitivity, where a small number of editing parameters can be tuned to constrain or eliminate disparate impact. We also theoretically explain that coarse prior editing can locally optimize posteriors objectives thanks to graph filter robustness. We experiment on a diverse collection of 12 graphs with varying number of nodes, where our approach performs equally well or better than previous ones in minimizing disparate impact and preserving posterior AUC under fairness constraints.
\end{abstract}

\begin{keywords}
graph signal processing, node ranking, algorithmic fairness, disparate impact, optimization
\end{keywords}

\section{Introduction}
Relational data can be organized into graphs, where entities are represented as nodes and linked through edges corresponding to real-world relations between them. 
Due to the pervasiveness of complex graphs across disciplines such as social networking, epidemiology, genomic analysis and software engineering, various schemes have been proposed to mine relational information. These range from the unsupervised paradigms of node clustering \citep{schaeffer2007graph,kulis2009semi} and exposing underlying structures through edge sparsification \citep{spielman2004nearly} to semi-supervised inference of posterior attribute scores based on known node prior values \citep{kipf2016semi,chen2020simple}. A mechanism favored by many approaches is propagating latent or predictive attribute information through graphs via recursive aggregation among node neighbors. For example, information propagation has been used in unsupervised extraction of tightly knit node clusters with few outgoing edges \citep{andersen2006local,wu2012learning}, node ranking algorithms that recommend nodes based on their structurally proximity to a set of query ones \citep{tong2006fast,kloster2014heat} and recent graph neural network advances that decouple latent attribute extraction with their propagation to neighbors \citep{klicpera2018predict,dong2020equivalence,huang2020combining}.
\par
A systematic way to study information propagation is through graph signal processing \citep{gavili2017shift,ortega2018graph,sandryhaila2013discrete}. This domain extends discrete signal processing to higher dimensions, where signals comprise prior values spread not across points in time but across graph nodes.\footnote{\footnotesize Discrete signal processing can be modeled by graph signal processing if time is considered a line graph where points in time are nodes and consecutive ones are linked.} In analogy to time filters, graph filters produce posterior node scores through a weighted aggregation of propagating priors different hops away, where propagation follows either spectral or stochastic rules.
\par
Fairness concerns arise when the outputs of data mining systems, such as graph filters, are correlated to sensitive attributes, such as gender or ethnicity \citep{chouldechova2017fair,kleinberg2018algorithmic}. Previous research has studied bias mitigation in the sense that sensitive and non-sensitive groups of data samples behave similarly under evaluation measures of choice \citep{chouldechova2017fair,krasanakis2018adaptive,zafar2019fairness,ntoutsi2020bias}. In this work, we tackle the fairness objective of achieving (approximate) statistical parity between sensitive and non-sensitive positive predictions---a concept known as disparate impact elimination \citep{biddle2006adverse,calders2010three,kamiran2012data,feldman2015certifying} and often assessed through a measure called \emph{pRule}. In particular, we explore the problem of imposing fairness objectives on graph filter posteriors, such as maximimizing pRule or making it reach a predetermined level, while maintaining the ability to form accurate predictions over different thresholding criteria, as measured by recommender system measures \citep{shani2011evaluating,wang2013theoretical,isinkaye2015recommendation}, such as AUC. For instance, this can help ensure that a protected set of nodes (the ones considered sensitive) are also frequently but not erroneously recommended.
\par
This work extends our previous paper on making graph filter posteriors fair \citep{krasanakis2020applying}. There, we proposed that satisfying fairness-aware objectives with minimal impact to posterior score quality can be achieved by appropriately editing graph signal priors. This way, new posteriors can be guided to be fairer while respecting information propagation through the graph's structure, which is responsible for predictive quality. In our previous paper, we first tackled this objective by editing priors with schemes of few parameters, which depend on whether nodes are sensitive and the differences between priors and posteriors. Our base assumption was that parameters can be tuned to yield priors proportionate to their estimated contribution to fair yet similar to original posteriors.
\par
In this work we improve various aspects of our previous research. First, we provide a novel mathematical framework to express how tightly prior editing mechanisms should approximate the gradients of posterior objectives and use it to explain why coarse prior editing models can locally optimize fairness-aware objectives. Second, we propose an alternative to our approach that uses an error-based instead of perturbation-based surrogate model, a more robust training objective that is not impacted by high posterior score outliers and an explicit prior retention term. These changes induce higher average AUC when mitigating disparate impact, though the two alternatives outperform each other on different experiments. Finally, we assess the efficacy of our work by experimenting on a significantly larger corpus of 12 instead of 4 multidisciplinary real-world graphs combined with 8 instead of 2 base graph filters. To facilitate an informed discussion of our results, we also perform a rigorous instead of empirical post-hoc analysis of summary statistics.

\section{Background}\label{background}
In this section we provide the theoretical background necessary to understand our work. We start with a common community-based interpretation of node scores and their practical usefulness in Subsection~\ref{community}. Then, in Subsection~\ref{gsp}, we present graph signal processing concepts used to study a wide range of methods for obtaining node scores given prior information of node attributes. We finally discuss algorithmic fairness under the prism of graph mining and overview the limited research done to merge these disciplines in Subsection~\ref{fairness background}. The operations and symbols used in this work are summarized in Table~\ref{notation}.

\begin{table}[hbt]
\centering
\footnotesize
\label{notation}
\begin{tabular}{l|l}
    \textbf{Notation} & \textbf{Interpretation}\\
    \hline 
        $\mathcal{I}$ & Identity matrix with appropriate dimensions\\
        $\textbf{0}$ & Column vector of appropriate rows and zero elements\\
        $\textbf{1}$ & Column vector of appropriate rows and one elements\\
        $r[v]$ & Element corresponding to node $v$ of graph signal $r$\\
        $\mathcal{L}(r)$ & Loss function for graph filter posteriors $r$\\
        $\nabla \mathcal{L}(r)$ & Gradient vector of loss $\mathcal{L}(r)$ with elements $\nabla \mathcal{L}(r)[v]=\frac{\partial \mathcal{L}(r)}{\partial r[v]}$\\
        $|x|$ & Absolute value for numbers, number of elements for sets\\
        $\|x\|$ & L2 norm of vector $x$ computed as $\sqrt{\sum_v x[v]^2}$\\
        $\|x\|_1$ & L1 norm of vector $x$ computed as $\sum_v |x[v]|$\\
        $\|x\|_\infty$ & Maximum value of $x$ computed as $\max_v x[v]$\\
        $\lambda_1$ & Smallest eigenvalue of a positive definite matrix\\
        $\lambda_{\max}$ & Largest eigenvalue of a positive definite matrix\\
        $H(W)$ & Graph filter on normalization $W$ of the adjacency matrix\\
        $A\setminus B$ & Set difference, that is the elements of $A$ not found in $B$\\
        $a^T b$ & Dot product of column vectors $a,b$ as matrix multiplication\\
        $\mathbb{R}^{|\mathcal{V}|}$ & Space of column vectors comprising all graph nodes\\
        $diag([\lambda_i]_i)$ & A diagonal matrix $diag([\lambda_i]_i)[i,j]=\{\lambda_i\text{ if }i=j,0\text{ otherwise}\}$\\
        $A[u,v]$ & Element of matrix $A$ at row $u$ and column $v$\\
        $A^T$ & Transposition of matrix $A$ for which $A^T[u.v]=A[v,u]$\\
        $A^{-1}$ & Inverse of invertible matrix $A$\\
        $\{x\,|\,cond(x)\}$ & Elements $x$ satisfying a condition $cond$\\
        $P(e)$ & Probability of event $e$ occurring\\
        $P(e|cond)$ & Probability of event $e$ given that condition $cond$ is satisfied\\
        $\mathcal{S}$ & Set of sensitive nodes\\
        $\mathcal{S}'$ & Complement of $S$, that is the set of non-sensitive nodes\\
\end{tabular}
\caption{Mathematical notation. Graph-related quantities refer to a common studied graph.}
\end{table}

\subsection{Node Scores and Community Structure}\label{community}
Nodes of real-world graphs can often be organized into communities of either ground truth structural characteristics \citep{fortunato2016community,leskovec2010empirical,xie2013overlapping,papadopoulos2012community} or shared node attributes \citep{hric2014community,hric2016network,peel2017ground}. A common task in graph analysis is to score or rank all nodes based on their relevance to such communities. This is particularly important for large graphs, where community boundaries can be vague \citep{leskovec2009community,lancichinetti2009detecting}. Furthermore, node scores can be combined with other characteristics, such as their past values when discovering nodes of emerging importance in time-evolving graphs, in which case they should be of high quality across the whole graph. Many algorithms that discover communities with only a few known members also rely on transforming and thresholding node scores \citep{andersen2006local,whang2016overlapping}.
\par
A measure frequently used to quantify the quality of attribute recommendations, such as node scores, is AUC  \citep{hanley1982meaning}, which compares operating characteristic trade-offs at different decision thresholds. If we consider ground truth node scores $q_{test}[v]=\{1\text{ if node }v\text{ is a community member},0\text{ otherwise}\}$ (these form a type of graph signal defined in the next subsection), and posterior scores $r[v]$, the True Positive Rate (TPR) and False Positive Rate (FPR) operating characteristics for decision thresholds $\theta$ can be respectively defined as:
\begin{align*}
   &TPR(\theta)=P(q_{test}[v]=1\,|\,r[v]\geq\theta)\\
   &FPR(\theta)=P(q_{test}[v]=0\,|\,r[v]\geq\theta)
\end{align*}
where $P(a|b)$ denotes the probability of $a$ conditioned on $b$. Then, AUC is defined as the cumulative effect induced to the TPR when new decision thresholds are used to change the FPR and quantified per the following formula:
\begin{equation}\label{AUC}
    AUC=\int_{-\infty}^\infty TPR(\theta) FPR'(\theta)\,d\theta
\end{equation}
AUC values closer to $100\%$ indicate that community members achieve higher scores compared to non-community members, whereas $50\%$ AUC corresponds to randomly assigned node scores.

\subsection{Graph Signal Processing}\label{gsp}
\subsubsection{Graph signals} 
Graph signal processing \citep{ortega2018graph} is a domain that extends traditional signal processing to graph-structured data. To do this, it starts by defining graph signals $q:\mathcal{V}\to\mathbb{R}$ as maps that assign real values $q[v]$ to graph nodes $v\in\mathcal{V}$.\footnote{\footnotesize Signals with multidimensional node values can be expressed as ordered collections of real-valued signals.} Graph signals can be represented as column vectors $q'\in\mathbb{R}^{|\mathcal{V}|}$ with elements $q'[i]=q[\mathcal{V}[i]]$, where $|\cdot|$ is the number of set elements and $\mathcal{V}[i]$ is the $i$-th node of the graph after assuming an arbitrary fixed order. For ease of notation, in this work we use graph signals and their vector representations interchangeably by replacing nodes with their ordinality index---in other words, we assume the isomorphism $\mathcal{V}[i]= i$.
\par
Graph signals are often constructed from sets of query nodes that share an attribute of interest, in which case their elements are assigned binary values depending on whether respective nodes are \emph{queries} $q[v]=\{1\text{ if }v\text{ has the attribute}, 0\text{ otherwise}\}$. For example, the attribute of interest could capture whether nodes belong to the same structural community as query ones. In general, we consider graph signals that are constrained to non-negative values $q\in[0,\infty)^{|\mathcal{V}|}$, where $q[v]$ are proportional to the probabilities that nodes $v$ exhibit the attribute of interest. In this case, graph signal elements can be understood as node rank scores.
\subsubsection{Graph signal propagation}
A pivotal operation in graph signal processing is the one-hop propagation of node values to their graph neighbors, where incoming values are aggregated on each node. Expressing this operation for unweighted graphs with edges $\mathcal{E}\subseteq\mathcal{V}\times\mathcal{V}$ requires the definition of adjacency matrices $A$, whose elements correspond to the binary existence of respective edges, i.e. $A[i,j]=\{1\text{ if }(\mathcal{V}[i], \mathcal{V}[j])\in\mathcal{E}, 0\text{ otherwise}\}$. A normalization operation is typically employed to transform adjacency matrices into new ones $W$ with the same dimensions, but which model some additional assumptions about the propagation mechanism (see below for details). Then, single-hop propagation of node values stored in graph signals $q$ to neighbors yields new graph signals $q_{next}$ with elements $q_{next}[u]=\sum_{v\in\mathcal{V}}W[u,v]q[v]$. For the sake of brevity, this operation is usually expressed using linear algebra as $q_{next}=Wq$.
\par
Two popular types of adjacency matrix normalization are column-wise and symmetric. The first sets up one-hop propagation as a stochastic process \citep{tong2006fast} that is equivalent to randomly walking the graph and selecting the next node to move to from a uniformly random selection between neighbors. Formally, this is expressed as $W_{col}=AD^{-1}$, where $D=diag\big(\big[\sum_{v}A[u,v]\big]_{u}\big)$ is the diagonal matrix of node degrees. Columns of the column-normalized adjacency matrix sum to $1$. This way, if graph signals priors model probability distributions over nodes, i.e. their values sum to $1$ and are non-negative, posteriors also model probability distributions.
\par
On the other hand, symmetric normalization arises from a signal processing perspective, where the eigenvalues of the normalized adjacency matrix are treated as the graph's spectrum \citep{chung1997spectral,spielman2012spectral}. In this case---and if the graph is undirected in that the existence of edges $(u,v)$ also implies the existence of edges $(v,u)$---a symmetric normalization is needed to guarantee that eigenvalues are real numbers. To achieves this, the normalization $W_{symm}=D^{-1/2}AD^{-1/2}$ is predominantly selected, on merit that it has bounded eigenvalues and $D^{1/2}(I-W_{symm})D^{1/2}$ is a Laplacian operator that implements the equivalent of discrete derivation over graph edges.
\par
Graph signal processing research often targets undirected graphs and symmetric normalization, since these enable computational tractability and closed form convergence bounds of resulting tools.\footnote{\footnotesize Theoretical groundwork has been established to consider spectral equivalent for undirected graphs and, ultimately, asymmetric adjacency matrix normalizations \citep{chung2005laplacians,yoshida2019cheeger}. Unfortunately, these involve the extraction of Peron vectors in polynomial yet non-linear times \citep{bjorner1992chip} that do not scale well to graphs with hundreds of thousands or millions of nodes and edges.} In this work, we also favor this practice, because it also allows the graph equivalent of signal filtering we later adopt to maintain spectral characteristics needed by our analysis. The key property we take advantage of is that, as long as the graph is connected, the normalized adjacency matrix $W$ is invertible and has the same number of real-valued eigenvalues as the number of graph nodes $|\mathcal{V}|$ residing in the range $[-1,1]$. If we annotate these eigenvalues as  $\{\lambda_i\in[-1,1]\,|\,i=1,\dots,|\mathcal{V}|\}$, the symmetric normalized adjacency matrix's Jordan decomposition takes the form: 
\begin{equation*}
    W=U^{-1} diag([\lambda_i]_i)U
\end{equation*} 
where $U$ is an orthonormal matrix with columns the corresponding eigenvectors. Therefore, scalar multiplication, power and addition operations on the adjacency matrices also transform eigenvalues the same way. For example, it holds that $W^n=U^{-1}diag([\lambda_i^n]_i)U$.
\subsubsection{Graph filters} The one-hop propagation of graph signals is a type of shift operator in the multidimentional space modeled by the graph, in that it propagates values based on a notion of relational proximity. Based on this observation, graph signal processing uses it analogously to the time shift $z^{-1}$ operator and defines the notion of graph filters as weighted aggregations of multi-hop propagations. In particular, since $W^nq$ expresses the propagation $n=0,1,2,\dots$ hops away of graph signals $q$, a weighted aggregation of these hops means that the outcome of graph filters can be expressed as:
\begin{equation}\label{filter}
\begin{split}
&r=H(W)q\\
&H(W)=\sum_{n=0}^\infty h_n W^n
\end{split}
\end{equation}
where $H(W)$ is graph filter characterized by real-valued weights $\{h_n\in\mathbb{R}|n=0,1,2,\dots\}$ indicating the importance placed on propagation of graph signals $n$ hops away. In this work, we understand the resulting graph signal $r$ to capture \textit{posterior node scores} that are the result of passing \textit{prior node values} of the original signal $q$ through the filter.
\par
In this work, we consider graph filters that are positive definite matrices, that is whose eigenvalues are all positive. For symmetric normalized graph adjacency matrices with eigenvalues $\{\lambda_i\in[-1,1]|i=1,\dots,|\mathcal{V}|\}$, it is easy to check whether graph filters defined per Equation~\ref{filter} are positive definite, as they assume eigenvalues $\big\{H(\lambda_i)=\sum_{n=0}^\infty h_n\lambda_i^n\,|\,i=1,\dots,|\mathcal{V}|\big\}$
and hence we can check whether the graph filter's corresponding polynomial assumes only positive values:
\begin{equation}\label{condition}
    H(\lambda)>0\,\forall \lambda\in[-1,1]
\end{equation}
For example, filters arising from decreasing importance of propagating more hops away $h_n>h_{n+1}\,\forall n=0,1,\dots$ are positive definite.  
Two well-known graph filters that are positive definite for symmetric adjacency matrix nomralizations are personalized pagerank \citep{andersen2007local,bahmani2010fast} and heat kernels \citep{kloster2014heat}. These respectively arise from power degradation of hop weights $h_n=(1-a)a^n$ and the exponential kernel $h_n=e^{-t}{t^n}/{n!}$ for empirically chosen parameters $a\in[0,1]$ and $t\in\{1,2,3,\dots\}$.

\subsubsection{Sweep ratio} The sweep procedure \citep{andersen2006local,andersen2007using} is a well-known mechanism that takes advantage of graph filters to identify tightly-knit congregations of nodes that are also well-separated from the rest of the graph---a concept known as low subgraph conductance \citep{chalupa2017memetic}. When attributes modeled by graph signal priors are closely correlated to the formation of structural communities, it enhances the recommendation quality of posteriors.
\par
In detail, the sweep procedure assumes that a base personalized graph node scoring algorithm $R$ with strong locality \citep{wu2012learning}, such as personalized pagerank and heat kernels, outputs graph signal posteriors $R(q)$ for graph signal priors $q$ that comprise structurally close query nodes. The posteriors are said to be personalized on the query nodes and are compared to their non-personalized counterparts $R(\textbf{1})$, where $\textbf{1}$ is a vector of ones, through the following division:
\begin{equation}\label{sweep}
    r_{sweep}[v]=\frac{R(q)[v]}{R(\textbf{1})[v]}
\end{equation}
In this work, we follow the terminology of our previous research \citep{krasanakis2020applying} and refer to the post-processing of Equation~\ref{sweep} as the \textit{sweep ratio}. The sweep procedure orders all nodes based on this ratio and splits the order in two partitions so that conductance is minimized for the respective graph cut. This practice statistically yields well-separated partitions for a variety of node ranking algorithms \citep{andersen2006local,andersen2007using,chalupa2017memetic}. Hence, when nodes are scored based on their relevance to structural communities, the sweep ratio can be deployed to improve their quality. 

\subsection{Fairness of Graph Filter Posteriors}\label{fairness background}
In this subsection we introduce the well-known concept of disparate impact in the domain of algorithmic fairness, which we aim to either fully or partially mitigate. We also overview previous works that can be used to bring fairness to graph mining and graph filter posteriors.

\subsubsection{Disparate impact elimination} Algorithmic fairness is broadly understood as parity between sensitive and non-sensitive samples over a chosen statistical property. Three popular fairness-aware objectives \citep{chouldechova2017fair,krasanakis2018adaptive,zafar2019fairness,ntoutsi2020bias} are disparate treatment elimination, disparate impact elimination and disparate mistreatment elimination. These correspond to not using the sensitive attribute in predictions, preserving statistical parity between the fraction of sensitive and non-sensitive positive labels and achieving identical predictive performance on the two groups under a measure of choice.
\par
In this work, we focus on mitigating disparate impact unfairness \citep{chouldechova2017fair,biddle2006adverse,calders2010three,kamiran2012data,feldman2015certifying}. An established measure that quantifies this objective is the \textit{pRule} \citep{biddle2006adverse}; denoting as $R[v]\in\{0,1\}$ the binary outputs of a system $R$ for samples $v$, $S$ the set of sensitive samples and $S'$ the set of non-sensitive ones, that is the complement of S:
\begin{equation}
\begin{split}
    &pRule = \frac{\min\{q_{S},q_{S'}\}}{\max\{q_{S},q_{S'}\}}\in[0,1]\\
    &q_{S} = P(R[v]=1|p\in S)\\
    &q_{S'} = P(R[v]=1|p\not\in S)
\end{split}
\end{equation}
The higher the pRule, the fairer a system is. There is precedence \citep{biddle2006adverse} for considering $80\%$ pRule or higher fair, and we also adopt this constraint in our experiments later.
\par
Calders-Verwer disparity $|p_S-q_{S'}|$ \citep{calders2010three} is also a well-known disparate impact assessment measure. However, although it is optimized at the same point as the pRule, it biases fairness assessment against high fractions of positive predictions. For example, it considers the fractions of positive labels $(p_S,q_{S'})=(0.8, 0.6)$ less fair than $(p_S,q_{S'})=(0.4, 0.3)$. We shy away from this understanding because a stochastic interpretation of posterior nodes scores could be scaled by a constant yet unknown factor. On the other hand, the pRule would quantify both numerical examples as $\tfrac{0.6}{0.8}=\tfrac{0.3}{0.4}=75\%$ fair. 
\subsubsection{Posterior score fairness}
In domains related to the outcome of graph mining algorithms, fairness has been defined for the order of recommended items \citep{beutel2019fairness,biega2018equity,yang2017measuring,zehlike2017fa} as equity in the ranking positions between sensitive and non-sensitive items. However, these notions of fairness are not applicable to the more granular understanding provided by node scores.
\par
Another definition of graph mining fairness has been introduced for node embeddings \citep{bose2019compositional,rahman2019fairwalk} under the guise of fair random walks---the stochastic process modeled by personalized pagerank when the adjacency matrix is normalized by columns. Yet the fairness of these walks is only implicitly asserted through embedding fairness. Furthermore, they require a certain number of sensitive nodes to make sure that at least one is available to walk to at every step.
\par
Fairness has also been recently explored for the outcome of graph neural networks \citep{dai2020fairgnn}, which can be trained to produce fair recommendations, even under partial knowledge of sensitive attributes. Still, advances on graph neural network theory \citep{dong2020equivalence} suggest that they can be perceived as decoupled multiplayer perceptron and graph filter components, in which case the question persists on how to make the outcome of graph filters fair, given potentially unfair priors. For graph neural networks that follow a predict-then-propagate paradigm \citep{klicpera2018predict}, this could be achieved by the aforementioned practice of adding fairness objectives to loss functions responsible for the construction of graph signal priors from data. However, the model deriving graph signal priors could be too costly to retrain or not available. In such cases, the problem of inducing fairness reverts to the graph signal processing viewpoint of this work.
\par
A recent work by \cite{tsioutsiouliklis2020fairness} has initiated a discourse on posterior score fairness. Although focused on algorithms scoring the structural importance of nodes without any kind of personalization pertaining to specific graph signal priors, it first recognized the need for optimizing a trade-off between fairness and preserving posterior score quality. Furthermore, it provided a first definition of node score fairness, called $\phi$-fairness. Under a stochastic interpretation of node scores, where they are proportional to the probability of nodes assuming positive labels, $\phi$-fairness is equivalent to full disparate impact elimination when $\phi=\frac{|S|}{|S|+|S'|}$.
\par
In our previous work \citep{krasanakis2020applying}, we introduced a generalization of pRule to posterior scores, which we also adopt in this work. In particular, we start from the above-mentioned stochastic interpretation of posteriors and calculate the expected positive number of sensitive and non-sensitive node labels obtained by sampling mechanism that uniformly assigns positive labels with probabilities proportional to posterior scores. This defines the quantities:
\begin{equation*}
    \begin{split}
        &p_S=P(R[v]=1|p\in S)=\tfrac{1}{|S|} {\sum_{v\in S} \tfrac{r[v]}{\|r\|_\infty}}\\
        &q_{S'}=P(R[v]=1|p\not\in S)=\tfrac{1}{|S'|}{\sum_{v\not\in S} \tfrac{r[v]}{\|r\|_\infty}}
    \end{split}
\end{equation*}
where the maximum value of posteriors $\|r\|_\infty=\max_u r[u]$ is used for normalization and $R$ is a stochastic process with probability $P(R[v]=1)=\frac{r[v]}{\|r[u]\|_\infty}$. Plugging these in the pRule cancels out the normalization of dividing both the nominator and denominator with the same value and yields the following formula for calculating a stochastic interpretation of the pRule given posterior node scores $r$:
\begin{equation}\label{prule}
\begin{split}
    &pRule(r) = \frac{\min\big\{|S'|\sum_{v\in S}r[v],|S|\sum_{v\not\in S}r[v]\big\}}{\max\big\{|S'|\sum_{v\in S}r[v],|S|\sum_{v\not\in S}r[v]\big\}}
\end{split}
\end{equation}
By convention, we consider $pRule=0$ when all node scores are zero.

\section{Optimizing Posterior Scores}
Optimizing graph filter posteriors to better satisfy objectives by adjusting their values deteriorates the quality gained by passing priors through graph filters. Ultimately, this leads to losing the robustness of propagating information through node relations by introducing one additional degree of freedom for each node. 
To prevent loss of quality, we propose that, instead of posteriors, graph signal priors can be edited based on their initial posteriors so as to find new posteriors better satisfying set objectives. This scheme is demonstrated in Figure~\ref{fig:overview}, where initial priors $q$ and their respective posteriors $r$ are used to construct new edited priors $q_{est}$. The editing process is ideally controlled by few parameters, which can be tuned to let posteriors $r_{est}$ of edited priors optimize the objective.
\begin{figure}[tbp]
\centering
    \includegraphics[width=0.8\textwidth,clip]{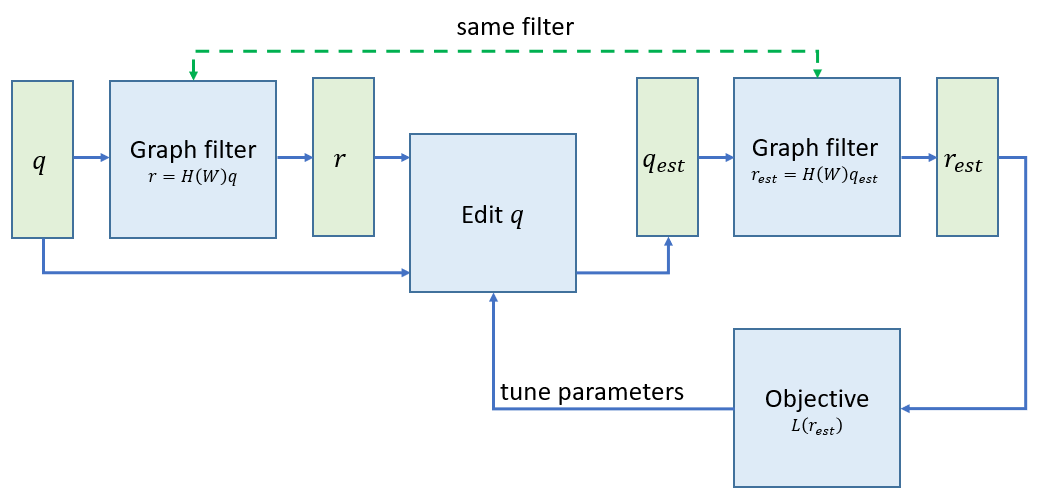}
    \caption{Starting from graph priors $q$ and estimating edited priors $q_{est}$ that help find graph filter posteriors $r_{est}$ locally optimizing an objective $\mathcal{L}(r_{est})$.}
    \label{fig:overview}
\end{figure}

\par
In Subsection~\ref{objectives} we demonstrate fairness-aware objectives that encourage disparate impact mitigation and discourage large posterior changes. Then, in Subsection~\ref{fairness aware personalization} we explore mechanisms of editing priors to be proportional to the probability of respective posterior scores approaching ideal ones optimizing the objective. This probability is estimated with a surrogate model of differences between original priors and posteriors. Given that biases against sensitive nodes could also affect the ability to estimate ideal posteriors, different model parameters are reserved for nodes with sensitive attributes. We also propose an alternative to our previous approach, that uses absolute errors instead of differences between priors and posteriors and introduces an additional parameter to explicitly control original prior retention.
The involvement of sensitive attributes in both prior editing and objective calculation is demonstrated in the updated scheme of Figure~\ref{fig:sensitive_overview}. 

\begin{figure}[tbp]
\centering
    \includegraphics[width=0.8\textwidth,clip]{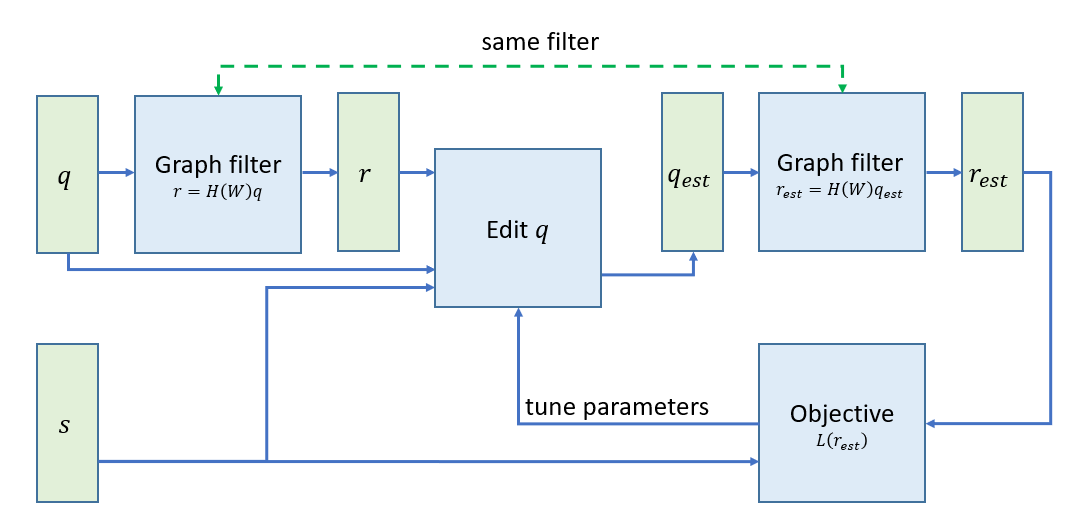}
    \caption{Starting from graph priors $q$ and estimating edited priors $q_{est}$ that help find graph filter posteriors $r_{est}$ locally optimizing an fairness-aware objective $\mathcal{L}(r_{est})$. A graph signal $s$ holding sensitive attribute values is used for prior editing and calculating the objective.}
    \label{fig:sensitive_overview}
\end{figure}
\par
Finally, in Subsection~\ref{achieving optimality} we present a new theoretical justification of why prior editing approaches are able to (locally) optimize posterior objectives using few parameters. To do this, we take advantage of graph filter robustness against prior perturbations to show in Theorem~\ref{robustness} that non-tight (that is, bound by the eigenvalue ratio of minimum over maximum eigenvalues multiplied by scalar depending on the post-processing) approximations of prior optimization slopes suffice to lead posteriors to local optimality. Under the assumption that prior editing gradients exhibits enough degrees of freedom to express objective gradients with the same loose tightness as before, we then show in Theorem~\ref{anyedit} that there exist (not necessarily observed) parameter optimization trajectories that arive at edited priors with locally optimal posteriors.

\subsection{Fairness-Aware Objectives}\label{objectives}
Tuning prior editing mechanisms requires fairness-aware objectives that trade-off disparate impact mitigation and original posterior score preservation. The first of these two components is quantified by the pRule, which can also be constrained so that values over a threshold $sup_{prule}$ do not induce further gains but lower ones are penalized with large weights $w_{prule}$. On the other hand, retaining original posterior scores can be assessed through a distance measure between the original posteriors $r$ and the estimated ones $r_{est}$. In our previous work, we penalized the mean absolute difference between the max-normalized version of original and new posteriors:

\begin{equation}\label{objective}
\begin{split}
    &\text{minimize } \mathcal{L}(r_{est})\\
    &\mathcal{L}(r_{est}) = \frac1{|V|} {\bigg\|\frac{r_{est}}{\|r_{est}\|_\infty}-\frac{r}{\|r\|_\infty}\bigg\|_1}-w_{prule}\cdot \min\{pRule(r_{est}),sup_{prule}\}
\end{split}
\end{equation}

A shortcoming of this objective is that it is heavily influenced by the maximum posterior scores used for normalization. Furthermore, if a few high posteriors were disproportionately large (for example, orders of magnitude larger) compared to the rest, their comparisons would dominate the assessment. As an extreme example, if a clique of nodes with prior values $1$ were disconnected from the rest of the graph, personalized pagerank would yield posteriors $1$ for these and hence would not normalize other node scores at all. Similar arguments hold if priors favor the discovery of well-separated communities from the rest of the graph, as graph filter algorithms often do. For instance, the sweep ratio explicitly aims to amplify this phenomenon.
\par
Dividing posteriors with summary statistics, such as their L1 norm, can not prevent high posterior scores from assuming disproportionately larger values and dominate comparisons. To address this persisting issue, in this work we move away from the mean absolute difference comparison and instead use the KL-divergence of estimated posteriors distributions given the distribution of original high quality posteriors. A clear advantage of this measure is that it compares distributions as a whole (it has thus been used to estimate differences between different Gaussian distributions for metric learning \citep{wang2017multi}) and is hence less influenced by outliers or few disproportionately large posteriors. We convert posteriors to distributions by normalizing them by division with their L1 norm, in which case the KL divergence measures the entropy lost by moving from the original posterior distribution to the new one:
\begin{equation}
KL(r_{est}\,|\,r) = -\tfrac{r_{est}^T}{\|r_{est}\|_1} \ln \tfrac{r/\|r\|_1}{r_{est}/\|r_{est}\|_1}
\end{equation}
where the vector division is performed element-by-element and by convention $0\ln x =0\,\forall x\in[0,\infty)\cup\{\infty\}$. Replacing the mean absolute error in the previous objective with KL-divergence, we end up with the following more robust variation of the objective:
\begin{equation}\label{objectiveKL}
\begin{split}
    &\text{minimize } \mathcal{L}(r_{est})\\
    &\mathcal{L}(r_{est}) = KL(r_{est}\,|\,r)-w_{prule}\cdot \min\{pRule(r_{est}),sup_{prule}\}
\end{split}
\end{equation}

\subsection{Fairness-Aware Prior Editing}\label{fairness aware personalization}
We now explore in more detail the advances of our previous work \citep{krasanakis2020applying}, where we ported to graphs a pre-processing scheme that weights training samples to make well-calibrated black box classifiers fair \citep{krasanakis2018adaptive}. This scheme proposes that unfairness is correlated to misclassification error (the difference between binary classification labels and calibration probabilities) and is influenced by whether samples are members of a sensitive group. Since strongly misclassified samples could exhibit different degrees of bias from correctly classified ones, it skews calibrated probabilities to make them fair by a transformation of misclassification error. Different skewing parameters are determined for sensitive and non-sensitive samples.
\par
These considerations introduce a type of balancing between original predictive abilities and sources of unfairness, depending on the parameters of misclassification error transformation and the ones controlling whether calibration probabilities should increase and decrease. Unfortunately, the same principles can not be ported as-are to graph signal processing, because weighting zero node priors through multiplication does not affect posteriors and there is no posterior validation set on which to tune permutation parameters. We address these issues by respectively performing non-linear edits of graph signal priors instead of scaling them and using priors as a rough one-class validation set of known positive examples.

\subsubsection{Conditional error estimation} Under the above assumptions, we refer to a stochastic interpretation of graph signal posteriors similar to the one used to set up the stochastic generalization of pRule; posteriors are snapped to $1$ with probability proportional to their value and to $0$ otherwise. We also consider edited graph signal priors $q_{est}$ that help estimate posteriors $r_{est}=H(W)q_{est}$ of similar fairness to some unobserved ideal ones $r_{fair}$. Then, given that graph signal priors suggest a ground truth categorization of nodes based on predictive attributes, we condition the probability of fairness-inducing node posteriors achieving their ideal values on whether they match their priors, that is on whether high posterior values correspond to high prior values. This is formally expressed for graph nodes $v$ as:
\begin{align*}
P(r_{fair}[v]=r_{est}[v])
&= P(r_{fair}[v]=r_{est}[v]\,|\,q[v]=r_{est}[v])P(q[v]=r_{est}[v])
\\&+P(r_{fair}[v]=r_{est}[v]\,|\,q[v]\neq r_{est}[v])P(q[v]\neq r_{est}[v])
\end{align*}
\par
Since graph filters tend to be strongly local in the sense that their posteriors preserve at least some portion of their priors (a fraction of their value equal to or greater than $\frac{h_0}{\sum_{n=0}^\infty h_n}$ comes from their priors if non-negative parameters are used to define filters per Equation~\ref{filter}), we further argue that fairness of posteriors pertains to fairness of respective priors. Hence, probabilities of estimated node posteriors approaching fair ones given tight approximations original priors are correlated with the probabilities of priors being fair $P(q_{fair}[v]=q_{est}[v])$. The same reasoning applies for estimated posteriors \textit{not} approximating well the original priors.
\par
An additional observation is that, when one of the conditional probabilities of correctly discovering fairness-aware posteriors increases (compared to the rest of nodes), the others should decrease and conversely, as they are conditioned on complementary events. Therefore, they should not simultaneously overestimate or underestimate original ones. Formally, this property can be expressed as:
\begin{align*}
    &\big(\tfrac{1}{K_{p}}P(r_{fair}[v]=r_{est}[v]\,|\,q[v]=r_{est}[v])-P(r_{fair}[v]=r_{est}[v])\big)\\
    &\quad \cdot \big(\tfrac{1}{K_{n}}P(r_{fair}[v]=r_{est}[v]\,|\,q[v]\neq r_{est}[v])- P(r_{fair}[v]=r_{est}[v])\big) < 0
\end{align*}
for some scaling parameters $K_{p},K_{n}>0$ that let probabilities sum to $1$. We now develop a surrogate model 
satisfying this property. This model depends on differences between posteriors and priors and whether nodes are sensitive:
\begin{align*}
&P(r_{fair}[v]= r_{est}|q[v]=r_{est}[v])\approx K_{p}\,P(q_{fair}[v]=q_{est}[v])\, e^{-b[v](r[v]/\|r\|_\infty-q[v])}
\\&P(r_{fair}[v]= r_{est}|q[v]\neq r_{est}[v])\approx K_{n}\,P(q_{fair}[v]=q_{est}[v])\,e^{b[v](r[v]/\|r\|_\infty-q[v])}
\end{align*}
where $b$ is a vector of real values such that $b[v]=\{b_S\text{ if }v\in S, b_{S'}\text{ otherwise}\}$. 

\subsubsection{Prior editing mechanism} The selection of sensitive or non-sensitive nodes to contribute to original priors can be viewed as Bernoulli trials with probabilities $\alpha_S$ and $\alpha_{S'}$ \citep{krasanakis2018adaptive}, which may differ depending on bias involved in prior selection. For example, there is often a lower probability of reporting sensitive nodes to construct priors \citep{cassel2019risk}. We organize prior selection probabilities into a graph signal $\alpha$ with values
$\alpha[v]=P(q[v]=r_{est}[v])=\{\alpha_S\text{ if }v\in S, \alpha_{S'}\text{ otherwise}\}\in[0,1]$.
\par
Therefore, given the surrogate model of conditional errors, we make a fair prior estimation $q_{est}$ of ideal priors based on the self-consistency criterion that, when it approaches fairness-inducing personalization, estimated fair ranks should also approach the ideal fair ones:
\begin{equation}\label{FP}
\begin{split}
    q_{est}[v]&= P(r_{fair}[v]=r_{est}[v]\,|\,q_{fair}[v]=q_{est}[v])\approx\frac{P(r_{fair}[v]=r_{est}[v])}{P(q_{fair}[v]=q_{est}[v])}
    \\&= \alpha[v] K_{p} e^{-b[v](r[v]/\|r\|_\infty-p[v])}+(1-\alpha[v])K_{n}e^{b[v](r[v]/\|r\|_\infty-p[v])}
    \\&\propto a[v] e^{-b[v](r/\|r\|_\infty-p[v])}+(1-a[v])e^{b[v](r[v]/\|r\|_\infty-p[v])}
\end{split}
\end{equation}
for graph signal $a[v]=\{a_S\text{ if }v\in S, a_{S'}\text{ otherwise}\}\in[0,1]$ that relates to prior selection probabilities and permutation scaling factors per $a[v]\propto \tfrac{\alpha[v]K_{p}}{\alpha[v]K_{p}+(1-\alpha[v])K_{n}}$. Determining $a$ requires only two independent parameters $a_S,a_{S'}$.

\subsubsection{Error-based prior editing} In the above process we started with a surrogate model of conditional fair posterior probability estimation that uses the differences between original priors and their normalized posteriors as inputs. Although this approach was met with success in our previous work, we also point out that it implicitly involves information on whether nodes assume positive values in the original binary priors instead of only the deviation between posteriors and priors. In detail, the differences $r[v]/\|r\|_\infty-p[v]$ are positive for zero priors $p[v]=0$ but negative for positive priors $p[v]=1$. For example, when $b[v]>0$ high posteriors are penalized for the conditional probability of priors approaching estimated posteriors when in reality high posteriors need to be encouraged for positive priors. Similarly, when $b[v]>0$ it encourages small posteriors for the conditional probability of priors \textit{not} approaching estimated posteriors when in reality small posteriors need to be encouraged for zero priors.
\par
To introduce an explainable interpretation of parameters $b[v]$, we thus propose the alternative of adopting the absolute value of differences by using errors $|r[v]/\|r\|_\infty-p[v]|$ as inputs to the surrogate model. Following the same analytical process as above, this leads us to the following error-based prior editing mechanism as a potential competitor:
\begin{equation*}
    q_{est}[v]= a[v] e^{-b[v]\big|r/\|r\|_\infty-p[v]\big|}+(1-a[v])e^{b[v]\big|r[v]/\|r\|_\infty-p[v]\big|}
\end{equation*}
The ability of error-based perturbations to mitigate disparate impact while enjoying high predictive quality has also been justified and experimentally corroborated in previous research \citep{krasanakis2018adaptive}. As a final step, and to provide more granular control of the retention of original posteriors required by objective functions, we introduce an explicit trade-off between original and edited fairness-aware priors by adding a term of the latter weighted by a new parameter $a_0\in[0,1]$:
\begin{equation}\label{FairEdit}
    q_{est}[v]= a_0q[v]+a[v] e^{-b[v]\big|r/\|r\|_\infty-p[v]\big|}+(1-a[v])e^{b[v]\big|r[v]/\|r\|_\infty-p[v]\big|}
\end{equation}

\subsection{Local Optimality of Prior Editing}\label{achieving optimality}
In this subsection we investigate theoretical properties of positive definite graph filters that allow coarse approximations of optimal prior editing schemes to also reach local optimality with regards to posterior objectives.

\subsubsection{Graph filters with post-processing} Real-world graph filters often end up with a post-processed version of posteriors. For example, when personalized pagerank is implemented as an iterative application of the power method scheme $r= aWr+(1-a)q$, potential feedback loops that continuously increase posteriors for some asymmetric adjacency matrix normalizations are often avoided by performing L1 normalization, which divides node posteriors with their sum. This ends up not affecting the ratio of importance scores placed on propagating prior graph signals different number of hops away, but produces a scaled version of the filter for which node score posteriors sum to $1$.
\par
More complex post-processing mechanisms, such as the sweep ratio, may induce different transformations per node that can not be modeled by graph filters. Taking these concerns into account, we hereby introduce a notation with which to formalize post-processing; we consider a \textit{post-processing vector} with which posteriors are multiplied element-by-element. Formally, this lets us write post-processed posteriors $r$ of passing graph signal priors $q$ through a graph filter $H(W)$ as: 
\begin{equation}
    r = diag(p) H(W) q
\end{equation}
We stress that the exact post-processing transformation could change depending on both the graph filter and graph signal priors. However, we can decouple this dependency by thinking of the finally selected post-processing as one particular selection out of many possible ones. In this work we consider two types of postprocessing: a) L1 output normalization and b) the sweep ratio. These are respectively  modeled as multiplication with the inverse of the original posterior's L1 norm $q_{L1}[v]=\tfrac{1}{\|H(W)q\|_1}$ and the inverse of non-personalized posteriors
$q_{sweep}[v]=\tfrac{1}{(H(W)\textbf{1})[v]}$.

\subsubsection{Coarse approximation of gradients} To formalize the concept of approximately tracking the optimization slope of posterior objectives with small enough error, in Definition~\ref{optimizer} we introduce $\lambda$-optimizers of objectives as multivariate multivalue functions that approximate the (negative) gradients of objectives with relative error bound $\lambda$. Smaller values of this strictness parameter indicate tighter approximation of optimization slopes, whereas smaller values indicate looser tracking. To disambiguate the possible directions of slopes, our analysis considers loss functions of non-negative values to be minimized. 

\begin{definition}\label{optimizer}
A continuous function $f:\mathcal{R}\to \mathbb{R}^{|\mathcal{V}|}$ will be called a $\lambda$-optimizer of a loss function $\mathcal{L}(r)$ over graph signal domain $\mathcal{R}\subseteq \mathbb{R}^{|\mathcal{V}|}$ only if: 
\begin{equation*}
    \|f\big(r\big)+\nabla \mathcal{L}(r)\|< \lambda \|\nabla \mathcal{L}(r)\|\quad \text{ for all }\nabla \mathcal{L}(r)\neq \textbf{0}
\end{equation*}.
\end{definition}
We now analyse the robustness of positive definite graph filters with post-processing in terms of how tight optimizers of posterior losses should be for the graph filter's propagation mechanism to ``absorb'' the error. To this end, in Theorem~\ref{robustness} we find a maximum tightness parameter sufficient to lead to local optimality of posteriors with respect to the loss. The required tightness depends on the graph filter's maximum and minimum eigenvalues and the post-processing vector's maximum and minimum values and requires the graph filter to be symmetric positive definite. 

\begin{theorem}\label{robustness}
Let $H(W)$ be a positive definite graph filter and $p$ a postprocessing vector. If $f(r)$ is a $\tfrac{\lambda_1 \min_v p[v]}{\lambda_{\max} \max_v p[v]}$-optimizer of a differentiable loss $\mathcal{L}(r)$ over graph signal domain $\mathcal{R}\subseteq\mathbb{R}^{|\mathcal{V}|}$, where $\lambda_1,\lambda_{\max}>0$ are the smallest positive and largest eigenvalues of $H(W)$ respectively, updating graph signals per the rule:
\begin{equation}\label{update}
\begin{split}
    &\frac{\partial q}{\partial t}=f(r)
    \\&r=diag(p)H(W)q
\end{split}
\end{equation}
asymptotically leads to the loss to local optimality if posterior updates are closed in the domain, i.e. $r\in \mathcal{R}\Rightarrow diag(p)H(W)\int_0^{t}f(r)dt\in\mathcal{R}$.
\end{theorem}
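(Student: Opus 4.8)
The plan is to use the loss value along a trajectory, $V(t)=\mathcal{L}(r(t))$, as a Lyapunov function: I would show that $V$ is non-increasing under the flow of Equation~\ref{update} and becomes stationary only where $\nabla\mathcal{L}(r)=\textbf{0}$, and then close with a standard convergence argument. First I would differentiate $V$ along the flow. Since $r=diag(p)H(W)q$ and $\tfrac{\partial q}{\partial t}=f(r)$, the chain rule gives $\tfrac{d}{dt}V=\nabla\mathcal{L}(r)^T diag(p)H(W)f(r)$. Writing $f(r)=-\nabla\mathcal{L}(r)+e(r)$, Definition~\ref{optimizer} guarantees $\|e(r)\|<\lambda\|\nabla\mathcal{L}(r)\|$ for $\lambda=\tfrac{\lambda_1\min_v p[v]}{\lambda_{\max}\max_v p[v]}$, so that
\begin{equation*}
\tfrac{d}{dt}V=-\nabla\mathcal{L}(r)^T diag(p)H(W)\nabla\mathcal{L}(r)+\nabla\mathcal{L}(r)^T diag(p)H(W)e(r).
\end{equation*}

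Next I would bound the two terms separately. For the error term, rewriting $\nabla\mathcal{L}(r)^T diag(p)H(W)e(r)=\big(H(W)diag(p)\nabla\mathcal{L}(r)\big)^Te(r)$ and applying Cauchy--Schwarz together with $\|H(W)x\|\le\lambda_{\max}\|x\|$ and $\|diag(p)x\|\le(\max_v p[v])\|x\|$ yields $|\nabla\mathcal{L}(r)^T diag(p)H(W)e(r)|<\lambda\,\lambda_{\max}(\max_v p[v])\|\nabla\mathcal{L}(r)\|^2$. For the quadratic term, I would aim to establish $\nabla\mathcal{L}(r)^T diag(p)H(W)\nabla\mathcal{L}(r)\ge\lambda_1(\min_v p[v])\|\nabla\mathcal{L}(r)\|^2$, exploiting that $H(W)$ is symmetric positive definite with smallest eigenvalue $\lambda_1$ and that every entry of $diag(p)$ is at least $\min_v p[v]$. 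Combining the two bounds gives $\tfrac{d}{dt}V\le\big(\lambda\,\lambda_{\max}\max_v p[v]-\lambda_1\min_v p[v]\big)\|\nabla\mathcal{L}(r)\|^2$; substituting the value of $\lambda$ makes the bracket vanish, so $\tfrac{d}{dt}V\le 0$, with strict inequality whenever $\nabla\mathcal{L}(r)\ne\textbf{0}$ (the strict inequality in Definition~\ref{optimizer} supplies the margin).

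Finally I would finish the argument: the closure hypothesis keeps $r(t)\in\mathcal{R}$ for all $t$, so $V(t)$ is non-increasing and bounded below by $0$, hence convergent; since $\tfrac{d}{dt}V\le-c\|\nabla\mathcal{L}(r(t))\|^2$ for some constant $c>0$, the integral $\int_0^\infty\|\nabla\mathcal{L}(r(t))\|^2\,dt$ is finite, and an application of Barbalat's lemma (or LaSalle's invariance principle) forces $\|\nabla\mathcal{L}(r(t))\|\to 0$, i.e. the asserted asymptotic local optimality. I expect the main obstacle to be the quadratic-term lower bound above: $diag(p)H(W)$ is not symmetric once $p$ is non-constant, so its quadratic form is not immediately governed by the eigenvalues of its two factors. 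For scalar post-processing such as L1 output normalization the bound is immediate, since $diag(p)=c\,\mathcal{I}$ and the form collapses to $c\,\nabla\mathcal{L}(r)^T H(W)\nabla\mathcal{L}(r)\ge c\,\lambda_1\|\nabla\mathcal{L}(r)\|^2$; for a genuinely varying post-processing vector the delicate step is to control the quadratic form by conjugating $H(W)$ with $diag(p)^{1/2}$ and tracking how that relocates the spectrum, making the constant come out exactly as $\tfrac{\lambda_1\min_v p[v]}{\lambda_{\max}\max_v p[v]}$, which is precisely what the stated tightness parameter encodes.
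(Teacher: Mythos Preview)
Your approach is essentially the same as the paper's: differentiate $\mathcal{L}(r(t))$ along the flow, split $f(r)=-\nabla\mathcal{L}(r)+e(r)$, and bound the quadratic and cross terms separately to obtain $\tfrac{d}{dt}\mathcal{L}(r)<0$ whenever $\nabla\mathcal{L}(r)\ne\textbf{0}$. The only cosmetic difference is that the paper bounds the cross term via Cauchy--Schwarz in the $H(W)$-weighted inner product $\langle x,y\rangle=x^TH(W)y$ rather than the standard one you use, but it lands on the identical estimate $\lambda_{\max}(\max_v p[v])\|e(r)\|\,\|\nabla\mathcal{L}(r)\|$.

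The obstacle you single out---the lower bound $x^T diag(p)H(W)x\ge\lambda_1(\min_v p[v])\|x\|^2$ when $p$ is non-constant---is precisely the step the paper also takes, but it simply asserts the inequality in one line without justification. Your instinct that this is the delicate point is correct: $diag(p)H(W)$ is not symmetric, and the entrywise argument ``pull out $\min_v p[v]$'' is not valid because the individual terms $x_i(H(W)x)_i$ can be negative. So your proposal is at least as complete as the paper's own proof, and your explicit flagging of this gap (together with the Barbalat/LaSalle closure you add, which the paper replaces by a one-line ``therefore the loss asymptotically converges'') is a genuine improvement in rigor over what appears there.
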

\begin{proof}
For non-negative posteriors $r=diag(p)H(W)q$, and non-zero loss gradients, the Cauchy-Shwartz inequality in the bilinear space $\langle x,y\rangle=x^TH(W)y$ determined by the positive definite graph filter $H(W)$ yields:
\begin{align*}
    &  \frac{d \mathcal{L}(r)}{dt}
    \\&\quad=(\nabla \mathcal{L}(r))^T diag(p) H(W) \frac{d q}{dt}
    \\&\quad=(\nabla \mathcal{L}(r))^T diag(p) H(W)f(r)
    \\&\quad=(\nabla \mathcal{L}(r))^T diag(p) H(W)\big( -\nabla \mathcal{L}(r)+(f(r)+\nabla\mathcal{L}(r))\big)
    \\&\quad\leq -\min_v p[v](\nabla \mathcal{L}(r))^T H(W)\nabla \mathcal{L}(r)
    +(\nabla \mathcal{L}(r))^Tdiag(p)H(W)(f(r)+\nabla\mathcal{L}(r)
    \\&\quad\leq -\lambda_1\min_v p[v]\|\nabla \mathcal{L}(r)\|^2
    +\lambda_{\max}\|f(r)+\nabla\mathcal{L}(r)\|\|diag(p)\nabla \mathcal{L}(r)\|
    \\&\quad< -\min_v p[v]\lambda_1\|\nabla \mathcal{L}(r)\|^2
    +\lambda_{\max}\tfrac{\lambda_1 \min_v p[v]}{\lambda_{\max} \max_v p[v]}\max_v p[v]\|\nabla \mathcal{L}(r)\|\|\nabla \mathcal{L}(r)\| 
    \\&\quad= 0
\end{align*}
Therefore, 
the loss asymptotically converges to a locally optimal point.
\end{proof}

\subsubsection{Known robustness limits} Following a similar approach as to check whether graph filters are positive definite in Equation~\ref{condition}, we can also bound the eigenvalue ratio by knowing only the type of filter but not the graph or priors. In particular, for graph filters $H(W)$ where $W$ are symmetric adjacency matrices of undirected graphs:
\begin{equation}
    \frac{\lambda_1}{\lambda_{\max}}\geq \frac{\min_{\lambda \in[-1,1]} H(\lambda)}{\max_{\lambda\in[-1,1]}H(\lambda)}
\end{equation}

Since personalized pagerank can be expressed in closed form as the filter $(1-a)(I-aW)^{-1}$ and heat kernels as $e^{-t(I-W)}$, where $a$ and $t$ are their parameters, their respective eigenvalue ratios for $\lambda\in[-1,1]$ are at most $\tfrac{1-a}{1+a}$ and $e^{-2t}$. For larger values of $a$ and $t$, which penalize less the spread of graph signal priors farther away, stricter optimizers are required to keep track of the gradient's negative slope. When normalization is the only post-processing employed, all elements of the personalization vector are the same and bounds for sufficient optimizer strictness coincide with the aforementioned eigenvalue ratios. In practice, these bounds are often lax compared to the small average node score posteriors $\tfrac{1}{|\mathcal{V}|}$ arising from L1 normalization in graphs with many (such as thousands or millions of) nodes. For example, for personalized pagerank with $a=0.85$ it suffices to select $0.081$-optimizers to edit priors. Even for wider prior diffusion with $a=0.99$ it suffices to select $0.005$-optimizers.

\subsubsection{Prior edits as optimizers} Based on the above analysis, we finally explain why, if the surrogate model of bias estimation matches real-world behavior, the proposed personalization error-based editing mechanism of Equation~\ref{FairEdit} can tweak posteriors to locally optimize the set fairness-aware objectives. To do this, in Theorem~\ref{anyedit} we translate the required tightness of optimizers to a corresponding tightness of projecting loss gradients to vector spaces of prior editing model parameter gradients. When this requirement is met, there exist prior editing model parameters that lead posteriors to locally optimize the objective. Since the same quantity as in Theorem~\ref{robustness} is used to decide adequate tightness, the analysis of the previous subsection indicates that even coarse surrogate models can be met with success.
\par
Intuitively, in Theorem~\ref{anyedit} the matrix $E$ computes the difference between the unit matrix and multiplying the matrix of parameter gradients $D_F(\theta)$ with its right pseudo-inverse; these differences are further constrained to matter only for nodes with high gradient values. This means that, as long as the objective's gradient has a clear direction to move towards to and this can be captured by the surrogate prior editing model $F(\theta)$, a parameter trajectory path exists to arrive at locally optimal prior edits. For example, if prior editing had the same number of parameters as the number of nodes and its parameter gradients were linearly independent, $D_F(\theta)$ would be a square invertible matrix, yielding $D_F(\theta)(D_F^T(\theta)D_F(\theta))^{-1}D_F^T(\theta)=\mathcal{I}\Leftrightarrow E=\textbf{0}$ and the theorem's precondition inequality would always hold. Whereas as the number of parameters decreases, it becomes more important for $F(\theta)$ to be able to induce degrees of freedom in the same directions as its induced loss's gradients, which our theoretical analysis aims to approximate with the self-consistency criterion that prior edits should induce fairness-aware posteriors.
\par
At this point we stress that this section's theoretical analysis indicates that prior editing models with few parameters \textit{could} yield locally optimal priors, since it suffices to form only loose approximations of desired properties. On the other hand, due to the mathematical intractability of plugging in true prior editing gradients in the computation of the pseudo-inverse, the real-world efficacy of proposed prior editing schemes needs to be experimentally corroborated, as we do in the next section. As a final remark, we stress that being able to explicitly retain priors for some parameters, as the new proposed model of Equation~\ref{FairEdit} (but not the previous model of Equation~\ref{FP}) does, is a necessary condition for our analysis to hold true.

\begin{theorem}\label{anyedit}
Let us consider graph signal priors $q_0$, positive definite graph filter $H(W)$ with largest and smallest eigenvalues $\lambda_{\max},\lambda_1$, post-processing vector $p$, differentiable loss function $\mathcal{L}(r)$ in domain $\mathcal{R}$ and a differentiable graph signal editing function $F(\theta)$ with parameters $\theta\in \mathbb{R}^K$ for which there exist parameters $\theta_0$ satisfying $F(\theta_0)=q_0$ and $pH(W)F(\theta)\in\mathcal{R}$. Let us consider the $|\mathcal{V}|\times K$ table function $D_F(\theta)$ with rows:$$D_F[v]=\nabla_\theta (F(\theta)[u])$$
where $\nabla_\theta$ indicates parameter gradient vectors. If for any parameters $\theta$ it holds that:
\begin{align*}
    &\big\|E\nabla \mathcal{L}(r)\big\|< \tfrac{\lambda_1 \min_v p[v]}{\lambda_{\max} \max_v p[v]}\|\nabla \mathcal{L}(r)\|\quad\text{ for }\nabla \mathcal{L}(r)\neq \textbf{0}\\
    &E = \mathcal{I}-D_F(\theta)(D_F^T(\theta)D_F(\theta))^{-1}D_F^T(\theta)\\
    &r = diag(p)H(W)F(\theta)
\end{align*}
then there exist parameters $\theta_{\infty}$ that make respective posteriors $r_\infty=diag(p)H(W)F(\theta_{\infty})$ be locally optimal.
\end{theorem}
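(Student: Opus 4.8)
The plan is to reduce the claim to Theorem~\ref{robustness} by flowing the editing parameters $\theta$ rather than the posteriors directly, using the left pseudo-inverse of the parameter Jacobian $D_F(\theta)$ to translate parameter motion into prior motion. Concretely, I would introduce the ODE
\[
\frac{\partial\theta}{\partial t}=-\big(D_F^T(\theta)D_F(\theta)\big)^{-1}D_F^T(\theta)\,\nabla\mathcal{L}(r),\qquad r=diag(p)H(W)F(\theta),
\]
with initial condition $\theta(0)=\theta_0$ (legitimate since $F(\theta_0)=q_0$), and track $q(t):=F(\theta(t))$. By the chain rule $\frac{dq}{dt}=D_F(\theta)\frac{d\theta}{dt}$, so the induced motion of the priors is $\frac{dq}{dt}=-D_F(\theta)\big(D_F^T(\theta)D_F(\theta)\big)^{-1}D_F^T(\theta)\,\nabla\mathcal{L}(r)$. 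The matrix $D_F(D_F^TD_F)^{-1}D_F^T$ is exactly the orthogonal projector onto the column space of $D_F$, i.e.\ $\mathcal{I}-E$, so $\frac{dq}{dt}=-(\mathcal{I}-E)\nabla\mathcal{L}(r)=:f(r)$ and the pair $q(t),r(t)=diag(p)H(W)q(t)$ is a trajectory of the update rule~\eqref{update}.

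The key observation is then that $f(r)+\nabla\mathcal{L}(r)=\nabla\mathcal{L}(r)-(\mathcal{I}-E)\nabla\mathcal{L}(r)=E\nabla\mathcal{L}(r)$, so the hypothesis $\|E\nabla\mathcal{L}(r)\|<\tfrac{\lambda_1\min_v p[v]}{\lambda_{\max}\max_v p[v]}\|\nabla\mathcal{L}(r)\|$ says \emph{precisely} that $f$ is a $\tfrac{\lambda_1\min_v p[v]}{\lambda_{\max}\max_v p[v]}$-optimizer of $\mathcal{L}$ in the sense of Definition~\ref{optimizer} --- the very strictness constant for which Theorem~\ref{robustness} guarantees asymptotic convergence to local optimality; continuity of $f$ follows from differentiability of $F$ and $\mathcal{L}$ together with the (implicit) invertibility of $D_F^TD_F$. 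The closure-in-domain requirement of Theorem~\ref{robustness} is supplied by the standing assumption $pH(W)F(\theta)\in\mathcal{R}$, which forces $r(t)\in\mathcal{R}$ along the whole flow. Applying Theorem~\ref{robustness} then yields that $\mathcal{L}(r(t))$ decreases asymptotically to a locally optimal value, and taking $\theta_\infty$ to be the limit of $\theta(t)$ gives the claimed $r_\infty=diag(p)H(W)F(\theta_\infty)$. If one prefers not to use Theorem~\ref{robustness} as a black box --- and to avoid having to assume $F$ injective so that $f$ can be regarded as a function of $r$ alone --- one can instead replay its Lyapunov computation verbatim along the $\theta$-flow, with $f(r)+\nabla\mathcal{L}(r)$ replaced by $E\nabla\mathcal{L}(r)$ throughout.

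The conceptual crux, and the only genuinely inventive step, is the choice of the pseudo-inverse as the parameter-space ``gain'': it is exactly what makes the discrepancy between the induced prior velocity and the ideal $-\nabla\mathcal{L}(r)$ equal to the orthogonal-complement projection $E\nabla\mathcal{L}(r)$, so that Theorem~\ref{robustness} applies with no slack to spare. The remaining points are technical loose ends that I would treat at the same informal level as the ``asymptotic local optimality'' phrasing already used in Theorem~\ref{robustness}: (i) well-posedness of the parameter ODE, which needs $D_F(\theta)$ to retain full column rank along the trajectory so that $(D_F^TD_F)^{-1}$ --- and hence the flow --- exist for all $t$ (this is already presupposed by the way $E$ is written); and (ii) upgrading ``the loss flows monotonically to a critical value'' to ``there exist parameters $\theta_\infty$ with locally optimal posterior'', which requires the trajectory not to escape to infinity before a limit is reached.
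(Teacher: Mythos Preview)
Your proposal is correct and follows essentially the same route as the paper: define the parameter flow via the left pseudo-inverse $\tfrac{d\theta}{dt}=-(D_F^TD_F)^{-1}D_F^T\nabla\mathcal{L}(r)$, observe that the induced prior velocity satisfies $\tfrac{dq}{dt}+\nabla\mathcal{L}(r)=E\nabla\mathcal{L}(r)$, and invoke Theorem~\ref{robustness}. Your treatment of the sign is in fact cleaner than the paper's (which sets $\tfrac{d\theta}{dt}=+x(t)$ and also inverts the strictness ratio in one line), and you correctly flag the same informal loose ends---full column rank of $D_F$ and existence of the limit $\theta_\infty$---that the paper leaves implicit.
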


\begin{proof}
Let us consider a differentiable trajectory for parameters $\theta(t)$ for times $t\in[0,\infty)$ that starts from $\theta(0)=\theta_0$ and (asymptotically) arrives at $\theta_\infty=\lim_{t\to\infty} \theta(t)$. For this trajectory, it holds that $\tfrac{d F(\theta(t))}{dt}=D_F(\theta)\frac{d\theta}{dt}$. Then, let us consider the posteriors $r(t)=diag(p)H(W)F(\theta(t))$ arising from the graph signal function $F(\theta(t))$ at times $t$, as well the least square problem of minimizing the projection of the loss's gradient to the row space of $D_F(\theta(t))$: $$\text{ minimize }\|\nabla \mathcal{L}(r(t))-D_F(\theta(t))x(t)\|$$
The closed form solution to this problem can be found by: $$x(t)=(D_F^T(\theta)D_F(\theta))^{-1}D_F^T(\theta)\nabla \mathcal{L}(r(t))$$
Thus, as long as $q(t)=F(\theta(t))$ (Proposition I), the theorem's precondition for posteriors $r(t)=diag(p)H(W)q(t)$ and $\nabla \mathcal{L}(r(r))\neq \textbf{0}$ can be written as:$$\|\nabla \mathcal{L}(r(t))-D_F(\theta(t))x(t)\|<  \tfrac{\lambda_1 \min_v p[v]}{\lambda_{\max} \max_v p[v]}\|\nabla \mathcal{L}(r(t))\|$$
Hence, if we set $x(t)$ as parameter derivatives:
$$\tfrac{d \theta(t)}{dt}=x(t)\Rightarrow \|\nabla \mathcal{L}(r(t))-\frac{d F(\theta(t))}{dt}\|<  \tfrac{\lambda_1 \min_v p[v]}{\lambda_{\max} \max_v p[v]}\|\nabla \mathcal{L}(r(t))\|$$
which means that, on the selected parameter trajectory, $\tfrac{F(\theta(t))}{dt}$ is a $\tfrac{\lambda_{\max} \max_v p[v]}{\lambda_1 \min_v p[v]}$-optimizer of the loss function. As, such, from Theorem~\ref{robustness} it discovers locally optimal posteriors.
\par
As a final step, we now investigate the priors signal editing converges at. To do this, we can see that the update rule leads to the selection of priors $q(t)$ at times $t$ for which:
\begin{align*}
    &\tfrac{d q(t)}{dt}=\tfrac{d F(\theta(t))}{dt}
    \\&\Rightarrow \lim_{t\to\infty} q(t)=q(0)+\int_{t=0}^{\infty} \tfrac{F(\theta(t))}{dt} dt = F(\theta(t_{\infty}))- F(\theta(0))=F(\theta(t_{\infty}))
\end{align*}
\par
We can similarly show that (Proposition I) holds true. Hence, there exists an optimization path of prior editing parameters (not necessarily the same as the one followed by the optimization algorithm used in practice) that arrives at the edited priors $F(\theta(t_{\infty}))$ for some parameters $\theta(t_{\infty})$ that let posteriors exhibit local optimality.
\end{proof}

\section{Experiment Setup}
In this section we describe the experiment settings (graphs and base graph filters), competing approaches and evaluation methodology used to assess whether prior editing can make graph filter posteriors fair.
\subsection{Graphs}\label{graphs}
Our experiments are conducted on 12 real-world graphs from the domains of social networking, scientific collaboration and software engineering. The graphs are retrieved from publicly available sources, namely the SNAP repository of large networks \citep{snapnets}, the network repository \citep{nr}, the LINQS repository \citep{linqs}, the AMiner repository of citation data sets \citep{tang2008arnetminer} and the software dependency graph data set \citep{Musco2016}. They are selected on merit of seeing widespread use in their respective domain research and comprising multiple node attributes, which we use as predictive and sensitive information.
\par
In all graphs, nodes are organized into potentially overlapping communities based on their attributes. Motivated by the frequent use of graph filters for recommendation, we consider membership to each community as a different binary attribute. When not stated otherwise, and to facilitate experiments with a small portion of seed nodes, we select the first community with more than $100$ nodes to experiment on. If there are no explicit sensitive node attributes, we designate members of the second community found during the previous search as sensitive.
\par
The sensitive attributes of social networking graphs could give rise to discriminative behavior against persons, such as recommending other attributes with disproportionately lower scores. Disparate impact mitigation safeguards against this kind of disparity. This is also useful in other types of graphs, where biases are less impactful from a humanitarian perspective, but can still impact the inclusiveness of recommendation to end-users. For example, in  scientific collaboration graphs, it could be important for older publishing venues to compete fairly with newer ones by not biasing recommendation scores against them, for example due to fewer handled publications. Or it could be important to make searches of scientific publications respect a lesser-known area of research, such as less known types of diabetes in the PubMed graph below. Finally, in software engineering graphs it is often useful to discover entities (such as libraries, source code artifacts) related to query ones while protecting those of a subsystem from being avoided, for example because it plays a pivotal role in a software project's architecture and potential risks of impacting it should be well-understood.
\par
Below we detail the type of data captured by each graph. Quantifiable characteristics, such as the number of graph nodes, edges, positive labels, sensitive labels and pRule of positive labels given sensitive ones are summarized in Table~\ref{tab:data sets}. For most graphs, the pRule of positive labels $0$, which indicates that either all positive labels are sensitive or all of them are not sensitive (flipping which nodes are considered sensitive retains data set pRule and experiment results).
\paragraph{ACM \citep{tang2008arnetminer}} A co-authorship graph whose nodes are authors forming edges based on whether they have co-authored a publication. We consider communities corresponding to different publication venues (such as journals or conferences) the authors have published in. To construct this data set, we processed the authorship and publication data of the 2017 version of the ACM citation data set extracted by AMiner.\footnote{\footnotesize ACM-Citation-network V9 from \url{https://aminer.org/citation}}
\paragraph{Amazon \citep{leskovec2007dynamics}} A graph comprising frequent Amazon product co-purchases, as well as their category. The graph was parsed from frequent co-purchase metadata hosted in the SNAP repository.\footnote{\url{https://snap.stanford.edu/data/amazon-meta.html}}
\paragraph{Ant \citep{Musco2016}} A method call graph for the Apache Ant dependency builder tool, where nodes are source code methods and edges indicate that one of the methods called the other. We retrieve this graph from the feature file corresponding to the project's release 1.9.2 in the software dependency graph data set and remove dangling nodes with only one edge. We consider methods of the same class, as identified through their signatures, to belong to the same community. 
Hence, predicting communities corresponds to predicting the organization of methods into classes.
\paragraph{Citeseer \citep{getoor2005link}} A citation graph where scientific publications are nodes and edges correspond to citations between them. Nodes are categorized into communities based. We use the version of the data set provided by the LINQS repository. Since our aim is to experiment on attribute propagation mechanisms, we do not experiment with available publication text tokens that could be used by more complex schemes (such as predict-then-propagate graph neural networks \citep{klicpera2018predict} that estimate graph signals through multilayer perceptrons before propagating them).
\paragraph{DBLP \citep{tang2008arnetminer}} A co-authorship graph whose nodes are authors forming edges based on whether they have co-authored a publication. We consider communities corresponding to different publication venues the authors have published in. To construct this data set, we processed the authorship and publication data of the 2011 version of the DBLP citation data set extracted by AMiner\footnote{\footnotesize DBLP-Citation-network V4 from \url{https://aminer.org/citation}}.
\paragraph{Facebook0 \citep{leskovec2012learning}} A Facebook graph of user friendships starting from given user and record social relations between them and their friends, including relations between friends. Ten such graphs are available in the source material, out of which we experiment on the first one. We use the version of the data set hosted by SNAP and select the anonymized binary `gender' attribute as sensitive and the first anonymized binary `education' attribute as the prediction label.
\paragraph{Facebook686 \citep{leskovec2012learning}} A graph obtained through the same process as Facebook0 and choosing a different graph out of those available in the source material (the ego network of user with identifier 686).
\paragraph{Log4J \citep{Musco2016}} A method call graph for the Java logging project Log4J, where nodes are source code methods and edges indicate that one of the methods called the other. We retrieve this graph from the feature file corresponding to the project's release 2.0b9 in the software dependency graph data set and apply the same preprocessing and community extraction steps as we did for Ant. 
\paragraph{Maven \citep{benelallam2019maven}} A dependency graph of Java libraries hosted in Maven central.\footnote{\url{https://zenodo.org/record/1489120\#.YCnNumgzaUk}} Nodes correspond to libraries and edges to dependencies between them, i.e. that one of the edge's nodes depends on the other. We organize libraries into communities based on their parent projects (for example, the libraries \textit{org.seleniumhq.selenium:selenium-server:3.0.1} and \textit{org.seleniumhq.selenium:selenium-support:3.0.1} are both consider part of the \textit{org.seleniumhq.selenium} project) and remove dangling nodes with only one edge.
\paragraph{Pubmed \citep{namata2012query}} A citation data set of PubMed publications pertaining to diabetes research, where nodes correspond to papers and edges to citations between them. Nodes form communities based on the type of diabetes they research. We use the version of the data set provided by the LINQS repository. 
\paragraph{Squirrel \citep{Musco2016}} A method call graph for the Squirrel email server written in Java, where nodes are source code methods and edges indicate that one of the methods called the other. We retrieve this graph from the feature file corresponding to the project's release 0.34 in the software dependency graph data set and apply the same preprocessing and community extraction steps as we did for Ant and Log4J. 
\paragraph{Twitter \citep{nr}} A Twitter graph of political retweets, where nodes are social media users and edges indicate which users have retweeted posts of others. This data set comprises only one anonymized attribute of binary political opinions (left or right). We consider this attribute as sensitive and its complement as prediction labels.

\begin{table}[tbp]
\footnotesize
\centering
    \begin{tabular}{l r r r r r}
         \textbf{data set} & \textbf{Nodes} & \textbf{Edges} & \textbf{pRule} & \textbf{Positive} & \textbf{Sensitive} \\
         \hline
         ACM & 505,016 & 1,137,011 & 51\% & 52,222 & 1,098\\
         Amazon & 554,789 & 1,545,228 & 0 & 280,507 & 64,915\\
         Ant & 4,920 & 9,393 & 0 & 783 & 336\\
         Citeseer & 3,327 & 4,676 & 0 & 596 & 668\\
         DBLP & 978,488 & 3,491,030 & 10\% & 1,039 & 193\\
         Facebook0 & 333 & 2519 & 91\% & 225 & 120\\
         Facebook686 & 170 & 3,312 & 91\% & 94 & 78 \\ 
         Log4J & 2,121 & 3,600 & 0 & 115 & 272\\
         Maven & 76,137 & 425,339 & 0 & 185 & 488\\
         Pubmed & 19,717 & 44,327 & 0 & 4,103 & 7,739\\
         Squirrel & 6,791 & 10,880 & 0 & 232 & 372\\
         Twitter & 18,470 & 48,365 & 0 & 11,355 & 7,115\\
    \end{tabular}
    \caption{Characteristics of data sets involved in experiments.}
    \label{tab:data sets}
\end{table}

\subsection{Base Graph Filters}\label{base filters}
For our experiments we consider the two popular types of graph filters we outline in Subsection~\ref{gsp}: personalized pagerank and heat kernels. Commonly used parameters for those filters that encourage few-hop propagation are $a=0.85$ and $t=3$. However, previous findings suggest that propagating graph signal priors  more hops away leads to higher posterior AUC for communities with many nodes \citep{krasanakis2020boosted}. We thus experiment with propagation parameters $a=0.99$ and $t=5$ that induce this behavior too. Since the sweep procedure is also a well-known method in improving posterior score quality (for example, it often increases AUC) we also experiment both with and without it. In total, depending on the choice of filter type, propagation parameter and post-processing, we experiment across $2 \cdot 2 \cdot 2=8$ base graph filters, which we outline in Table~\ref{tab:base filters}.

\begin{table}[tbp]
\footnotesize
\centering
    \begin{tabular}{l c c c c}
         \textbf{Annotation} & \textbf{Type of Filter} & \textbf{Parameter} & \textbf{Sweep Ratio Post-processing} \\
         \hline
         PPR.85 & personalized pagerank & $a=0.85$ & \\
         PPR.99 & personalized pagerank & $a=0.99$ & \\
         HK3 & heat kernels & $k=3$ & \\
         HK7 & heat kernels & $k=7$ & \\
         PPR.85S & personalized pagerank & $a=0.85$ & \checkmark  \\
         PPR.99S & personalized pagerank & $a=0.99$ & \checkmark \\
         HK3S & heat kernels & $k=3$ & \checkmark \\
         HK7S & heat kernels & $k=7$ & \checkmark \\
    \end{tabular}
    \caption{Base graph filters used in experiments.}
    \label{tab:base filters}
\end{table}
\par
In all cases, we treat graphs as undirected ones and employ symmetric normalization. This way, graph filters become positive definite and hence support our theoretical results. Posterior scores are computed to numerical precision of $10^{-9}$ using the implementations of chosen graph filters provided by the \textit{pygrank}\footnote{\url{https://pypi.org/project/pygrank/}} graph ranking library.

\subsection{Compared Approaches}\label{compared approaches}
In this subsection we outline promising existing and new approaches that improve the fairness of base graph filters. These span multiple fairness-aware objectives, such as maximizing fairness, meeting fairness constraints and trying to preserve the posterior quality measured by AUC, which we detail in Table~\ref{tab:methods}. Their implementation has been integrated in the \textit{pygrank} library.
\paragraph{None} The base graph filter.
\paragraph{Mult} A simple post-processing baseline that multiplies posterior scores across the sensitive and non-sensitive groups with a different constant each, so that disparate impact is fully mitigated. If $r$ are the base graph filter's posteriors and $s$ a graph signal holding the sensitive attribute, this method post-processes posteriors per the rules:
$$r_{Mult}[v]=\bigg(\frac{\phi s[v]}{\sum_{u\in S}s[u]r[u]}+\frac{(1-\phi)(1-s[v])}{\sum_{u\not\in S}s[u]r[u]}\bigg)r[v]$$
where $\phi=\frac{|S|}{|S|+|S'|}$ is the fraction of graph nodes that are sensitive and $s[u]=\{1\text{ if }u\in S, 0\text{ otherwise}\}$. It holds that $\sum_{v\in S}r_{Mult}[v]=\sum_{v\not\in S}r_{Mult}[v]$.
\paragraph{FairWalk} A random walk strategy previously used for fair node embeddings. We implement this as an asymmetric preprocessing applied to the graph's adjacency matrix that retains nodes degree but makes signals spread equally between sensitive and non-sensitive neighbors. This approach aims to predominantly maintain the propagation mechanism and attempts to make posteriors fair only if this is convenient by encountering nodes with both sensitive and non-sensitive neighbors.
\paragraph{LFPRO} Near-optimal redistribution of ranks causing disparate impact. Contrary to our assumption that aims to influence posteriors by editing priors, this approach directly offsets the disparate impact of posteriors by moving excess node scores between the sensitive and the non-sensitive nodes to improve the pRule as much as possible while maintaining non-negative scores. To avoid numerical underflows that erroneously prevent this approach from exact convergence in the graphs with many nodes, we repeat the gradual movement of scores up to numerical tolerance $10^{-12}$.
\paragraph{FairPers} Our previously proposed fair personalization model described by Equation~\ref{FP}. Its tradeoff parameters assume values $a_S,a_{S'}\in[0,1]$, whereas we limit exponentials to large enough range $b_S,b_{S'}\in[-10,10]$ that, if needed, lets some posteriors dominate the editing mechanism for large exponents of vanish for small exponents. This variation aims to maximize fairness while partially retaining posterior quality and is hence trained towards optimizing Equation~\ref{objective} for $w_{prule}=1$ and $sup_{prule}=1$. Parameters are tuned with an algorithm we developed as part of the \textit{pygrank} library for non-derivative optimization, which is detailed in Appendix~\ref{tuning}.
\paragraph{FairPers-C} A variation of FairPers that aims to impose strong fairness constraints with $w_{prule}=10$ of optimizing the pRule up to value $sup_{prule}=80\%$.
\paragraph{FairEdit} The personalization editing mechanism of Equation~\ref{FairEdit} proposed in this work. Its parameters assume values $\{\theta_K\in[-1,1]|K=0,1\}$ and $a_{pers}\in[0,1]$ and this variation trains them to maximize fairness while partially retaining posterior quality by optimizing Equation~\ref{objectiveKL} for $w_{pRule}=1$ and $sup_{pRule}=1$. Parameters are tuned with the same optimization algorithm as FairPers.
\paragraph{FairEdit-C} A variation of FairEdit that aims to impose strong fairness constraints with $w_{prule}=10$ of optimizing the pRule up to value $sup_{prule}=80\%$.

\begin{table}[tbp]
\footnotesize
\centering
    \begin{tabular}{l l c c c c}
         ~& & \textbf{Preserve} & \textbf{Maximimize} & \textbf{Fairness} \\
         \textbf{Method} & & \textbf{Posteriors} & \textbf{Fairness} & \textbf{Constraints} \\
         \hline
         None & \\
         Mult & (baseline) & & \checkmark &\\
         Fairwalk & \citep{rahman2019fairwalk} & \checkmark & Partially &\\
         LFPRO & \citep{tsioutsiouliklis2020fairness} & Partially & \checkmark &\\
         FairPers & \citep{krasanakis2020applying} & Partially & \checkmark & \checkmark\\
         FairPers-C & \citep{krasanakis2020applying} & \checkmark & & \checkmark\\
         FairEdit & (this work) & Partially & \checkmark & \checkmark\\
         FairEdit-C & (this work) & \checkmark &  & \checkmark\\
    \end{tabular}
    \caption{Explicit objectives of fairness-aware methods.}
    \label{tab:methods}
\end{table}

\subsection{Evaluation Methodology}\label{methodology}
This subsection details the methodology we follow to assess fairness-aware approaches on combinations of graphs and base graph filters, as well as the summary statistics used to perform a high level comparison between approaches.
\subsubsection{Train-test splits}
To rigorously evaluate fairness-aware approaches, we separate graph nodes into training and test sets, by uniformly sampling the former without repetition to comprise one of the fractions $\{10\%,20\%,30\%\}$ of all graph nodes. Positive labels are on average also split alongside the same fraction to construct graph signal priors (i.e. by assigning $1$ to their respective elements). For example, for the Maven data set a $20\%$ train-test split uses only $20\%\cdot 185=37$ of positive labels to construct graph signal priors. For each graph, we experiment with all three split ratios and average evaluation measures between them. To make sure that comparisons between different approaches are not affected by sampling variance, we use seeded sampling to select training nodes.
\subsubsection{Measures} Our experiments are focused on two types of fairness-aware objectives: maximizing the pRule of posterior scores and achieving high pRule values while preserving high posterior score quality. Both consider a recommender system setting, where we start from graph signal priors and graph filters are used to obtain node score posteriors. These posteriors then provide a granular understanding on how much nodes pertain to an attribute shared by nodes with non-zero priors. To assess how well approaches achieve high posterior quality and fairness we respectively calculate the AUC and pRule of their posteriors over the evaluation subset of nodes.
\par
An overview of information flow during evaluation is presented in Figure~\ref{fig:evaluation}. It is worth noting that, although only test nodes are used to calculate measures, the sensitive-aware approaches explored in this work require knowledge of sensitive attributes for all nodes, which include the test ones. Nonetheless, evaluation remains robust in the sense that training data are not overfitted as long as fairness-aware approaches are not provided with information about \textit{which} nodes are used for testing.\footnote{\footnotesize In our previous work, we performed robust fairness evaluation by keeping the minimum pRule between considering all vs considering only test nodes. However, in practice, we found that the pRule of test nodes dominates this quantity.} For example, approaches such as \textit{Mult} induce perfect fairness when considering all nodes, but this fairness needs to generalize to subsets of graph nodes, such as test ones.
\begin{figure}[b!]
\centering
    \includegraphics[width=1\textwidth,clip]{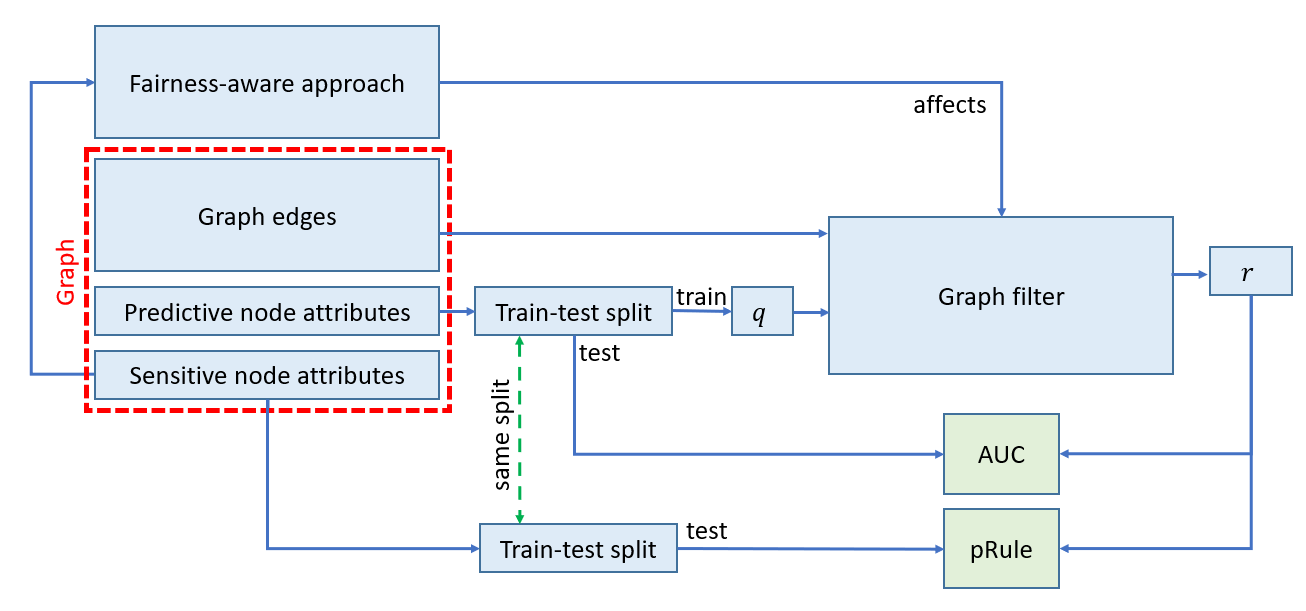}
    \caption{The process of assessing the AUC and pRule of a fairness-aware approach on a graph filter for a given graph and train-test split ratio.}
    \label{fig:evaluation}
\end{figure}
\subsubsection{Approach comparison}
Our experiments output one value for each measure per combination of graph, data set, fairness-aware approach and graph filter, for a total of $12\cdot 8\cdot 8=768$ combinations. To help gather insights from these results, we opt to delegate them to Appendix~\ref{results} and instead extract high-level summary statistics that make it easy to compare different approaches.
\par
To this end, we extract three summary statistics across all graphs and graph filters, depending on the type of post-processing (no post-processing or the sweep ratio). These statistics are: a) the average AUC of approaches and their statistical significant differences, b) the average pRule of approaches and their statistical significant differences and c) the percentage of experiments in which pRule achieves the constraint of $80\%$ or more. As an example, in Figure~\ref{fig:average} we demonstrate the process of obtaining the average AUC and pRule values for the {Mult} approach, where the same procedure is followed for all others too.
\begin{figure}[tbp]
    \centering
    \includegraphics[width=1\textwidth,clip]{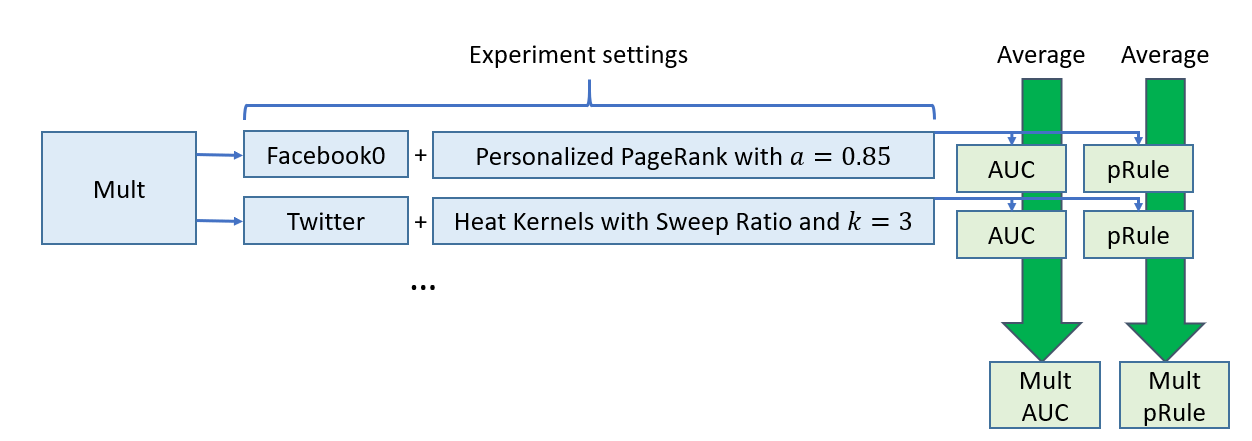}
    \caption{Averaging the outcome of \textit{Mult} AUC and pRule values across experiment settings.}
    \label{fig:average}
\end{figure}

\par
We assert which approaches differ significantly from the rest by following a well-known procedure for multi-approach comparison \citep{demvsar2006statistical,garcia2008extension,derrac2011practical}. In detail, we first employ a Friedman test with p-value $<$0.001 to assert that at least one approach differs from the others with statistical significance. Provided that the Friedman test rejects the null hypothesis that all approaches produce similar outcomes, we then use a Nemenyi post-hoc test to compare individual approaches. The latter ranks approaches for each measure (assigning rank $1$ to the best approach, $2$ to the second best and so on), reports the average rank across data sets and base ranking algorithms and outputs a \textit{critical difference} with p-value at most $0.05$, where average rank differences greater than it imply statistical significance at that p-value level.\footnote{\footnotesize The Nemenyi test is weaker than most other post-hoc tests, such as the Holm test, in that it does not necessarily discover all of statistically significant differences. But we prefer it on merit that it reports a clear order between approaches and a universal criterion to determine statistically significant differences. The p-value for this test is not selected to be too tight to offset its weakness.}

\section{Experiment Results}
Following the evaluation methodology detailed in the previous section, we compare the fairness-aware approaches proposed in this work ({FairEdit}, {FairEdic-C}) to the ones of our previous paper ({FairPers}, {FairPers-C}), an intuitive baseline ({Mult}) and two similar methods employed in the literature ({FairWalk}, {LFPRO}). This comparison involves running experiments for all combinations of graph filters and data sets and exploring the resulting AUC and pRule values presented in Appendix~\ref{results}.
\par
Table~\ref{tab:comparison} forms a high-level summary of experiment results. In particular, it reports the average of evaluation measures across all experiments for both types of post-processing. It also presents the Nemenyi ranks for multiway comparison between approaches. We remind that lower ranks indicate approaches performing better for the particular measure, where rank differences exceeding a critical difference indicate statistically significant improvements. For our both sets of comparisons, the critical difference indicating statistical significance is calculated as approximately $1.6$. For example, base graph filters (the approach dubbed {None}) with no post-processing are assigned 3.0 AUC rank, which indicates that they outperform {LFPRO} in this regard with statistical significance, as the latter's corresponding rank is $5.4>3.0+1.6$.  All numbers rounded to their two most important digits.
\par
\begin{table}[tbp]
\footnotesize
\centering
    \begin{tabular}{l | r r r | r r r}
        & \multicolumn{3}{c|}{\textbf{No Post-processing}} & \multicolumn{3}{c}{\textbf{Sweep Ratio}}\\
          & \textbf{AUC} & \textbf{pRule} & \textbf{pRule $\geq$ 80\%} & \textbf{AUC} & \textbf{pRule} & \textbf{pRule $\geq$ 80\%} \\
          \hline
         None & .76 (3.0) & .52 (6.5) & .21 & .77 (3.2) & .51 (6.8) & .23\\
         Mult & .73 (4.3) & .68 (5.7) & .33 & .73 (5.0) & .68 (5.8) & .35\\
         LFPRO & .67 (5.4) & .83 (4.2) & .75 & .67 (6.0) & .82 (4.7) & .69\\
         FairWalk & .75 (3.3) & .47 (6.8) & .17 & .75 (3.9) & .54 (6.3)& .31\\
         FairPers & .66 (5.8) & .93 (2.2) & .92 & .66 (5.3) & .95 (2.2) & .96\\
         FairPers-C & .72 (4.6) & .86 (4.3) & .79 & .72 (4.7) & .88 (4.1) & .92\\
         FairEdit & .69 (4.6) & .92 (2.2) & .88 & .70 (4.6) & .94 (2.0) & .92\\
         FairEdit-C & .72 (4.1) & .86 (4.2) & .90 & .76 (3.3) & .88 (4.2) & .94\\
    \end{tabular} 
    \caption{Average AUC, pRule and fraction of experiments achieving $80\%$ pRule for fairness-aware approaches (higher are better). Average Nemenyi ranks for measures in parenthesis (smaller are better), where rank differences greater than 1.6 are statistically significant. Different results are presented depending on the post-processeing of the base graph filter.}
    \label{tab:comparison}
\end{table}

\subsection{Fairness maximization} Compared to base graph filters, the approaches LFPRO, FairPers and FairEdit reduce the average AUC by at least 7\% with statistical significance, where LFPRO suffers from the greatest posterior quality reduction. Posterior quality loss for all three methods arises from their attempt to maximize fairness while only partially preserving posteriors.
\par
Among these approaches, FairEdit exhibits at least 3\% greater average AUC compared to the second-best FairPers at the cost of $1\%$ pRule. Although these differences are not statistically significant, they nonetheless identify FairEdit as the preferred method to eliminate disparate impact in real-world applications. Looking at detailed experiment results, FairPers and FairEdit outperform each other in different experiments, which suggests that there is added value in better controlling prior retention in future research, which we find in Appendix~\ref{ablation study} to contribute the most in this improvement.
\par
In terms of disparate impact elimination, LFPRO yields significantly smaller pRule by 9\%  compared to the two prior editing approaches, which corroborates our assumption that respecting the graph's structure by editing priors can better satisfy posterior objectives than directly editing those through uniformly skewing mechanisms.

\subsection{Achieving fairness constraints} The Mult, FairPers-C and FairEdit-C approaches in large part preserve posterior score quality, with their AUC seeing a reduction by at most by 5\% compared to base graph filters, while improving their pRule by large margins. This achievement can be attributed to prior editing approaches placing significance to pRule improvements only up to the target level, whereas Mult affects posterior node score comparisons only between sensitive nodes and the rest of the graph, which in most graphs at introduces only a small fraction of erroneously high node posterior scores compared to the number of nodes with non-positive labels.
\par
Overall, prior editing approaches achieve over 80\% pRule both on average and a remarkably large portion of experiments, which hints at their robust generalization capabilities. By contrast, Mult yields significantly lower pRule on average with statistical significance regardless of the and does not meet the fairness criterion in most experiments. In fact, the constrained approaches exhibit similar -and on average higher- pRule than even LFPRO, while at the same time enjoying sigificantly higher AUC values.

\subsection{Sweep ratio improvements}  We previously mentioned that the sweep ratio is often employed to improve posterior quality for graph filters. In line with this consensus, the average AUC of approaches when this kind of post-processing is used is at least equal to or greater than the one with no post-processing. Although this improvement is often marginal ($\sim$1\%), in the case of FairEdit-C the sweep ratio improves AUC by $4\%$. More importantly, this mechanism induces 2\% pRule improvement for all personalization editing approaches, which suggests that there is likely merit in employing it in new settings.
Intuitively, improvements thanks to the sweep ratio can be attributed to increasing the posteriors of low-scored nodes so that it becomes easier to move a portion of those to sensitive ones without affecting pairwise node score comparisons.

\subsection{Experiment outliers} Of the fairness-aware approaches, FairWalk performs similarly or worse than base graph filters. However, result details reveal that, in some experiments, it manages to improve the pRule. Hence, we argue that this method's failure lies more in sensitive nodes not being randomly mixed in all graphs we experiment on but forming structurally close communities; this prevents the fair random walk formulation from visiting sensitive and non-sensitive neighbors with equal probabilities, as for most nodes only one of those kinds of neighbors is available.
\par
Lastly, we overview some noticeable outliers in experiment outcomes. First, Mult and LFPRO sometimes outperform our approaches in preserving AUC (such as the near-100\% AUC of the Maven data set). Our understanding of these phenomena lies in these two approaches potentially affecting only the sensitive nodes without respecting propagation mechanisms, which lets them find success in specific cases. 
Another case that warrants attention is the inability of our approach to meet 80\% pRule for the DBLP graph when base graph filters involve personalized pagerank. We attribute this finding to further approximation terms being needed given that graph's eigenvalue ratio, but further investigation is needed.

\section{Conclusions and Future Work}
In this work we explored the concept of editing graph signal priors to locally optimize fairness-aware graph filter posterior objectives. To this end, we proposed an editing mechanisms with few parameters and showed that, given its parameter gradients' lose approximation of fairness-aware objective optimization slopes, it reaches local optima without overfitting posterior node scores. We then experimented on a large number of real-world graphs with various graph filters and signal priors, where we used our approach to produce high-quality posterior scores in terms of AUC that are fair (enough). Our findings suggest that, compared to other fairness-aware methods, prior editing is able to better reduce disparate impact or meet reduction constraints while in large part preserving the predictive quality of posteriors.
\par
Promising applications of this work could involve bringing fairness to decoupled graph neural network architectures, where feature extraction and propagation with graph filters are performed separately. Our theoretical framework on optimizing posteriors is also general enough to warrant application on other types of objectives, such as maximizing smooth relaxations of AUC. Finally, we are interested in further exploring the assumption that prior editing in its current form exhibits enough degrees of freedom and aim to provide a bound on its necessary number of terms in future work.

\acks{This work was partially funded by the European Commission under contract numbers H2020-860630 NoBIAS and H2020-825585 HELIOS.}


\newpage
\appendix

\section{Parameter Tuning}\label{tuning}
In this appendix we detail the algorithm we developed as part of the \textit{pygrank} library to optimize the parameters of graph signal prior editing mechanisms (FairPers, FairEdit and their constrained variations) in a given hypercube without derivating posterior objectives.
\par
Our algorithm involves cycling through a list of parameters $\theta$ and optimizing a loss function $\ell(\theta)$ by finding the best permutation around each one. For the prior editing approach of this work, the parameters are $\theta=\{a_S,a_{S'},b_S,b_{S'}\}$ and evaluating the loss function $\ell(\theta)=\mathcal{L}(H(W)q_{est})$ once for the respective prior estimation $q_{est}$ involves running a graph filter; for the graphs with millions of nodes, this requires approximately $1-2$ sec on the 2.2 Ghz 4-core CPU with hyperthreading used in experiments, after a preprocessing operation obtains a common sparse version of the adjacency matrix (sparse matrices require $O(|\mathcal{E}|)$ amortized time in big-o notation to perform one-hop propagations by multiplying graph signals). Though this time is not prohibitively large, to run the large number of experiments in this work we design the optimization algorithm with the goal of performing only a small number of parameter loss function evaluations. Intuitively, it is equivalent to moving the center of the selected rectangle chosen for each parameter based on subsequent selections of other parameters. So, at the time where the parameter's permutation breadth shrinks to the size of its intended rectangle, potential combinations with permutations of other parameters have been considered too. 
\par
The implemented algorithm is a variation of divided rectangles (DIRECT) \citep{finkel2006additive} that, instead of keeping many candidate rectangles to divide, keeps only one, though of larger width than the partition. This practice corresponds to the shrinking radius technique proposed for non-convex block coordinate optimization \citep{lyu2020convergence}, although the two are not mathematically equivalent due to the finite sum of rectangle widths that limits the optimization within the hypercube of searched parameters,

\begin{algorithm}[b!]
\KwIn{parameter loss $\ell(\theta)$, parameter square bound vectors $\theta_{\max},\theta_{\min}\in\mathbb{R}^K$, tolerance $\epsilon$, line partitions $part$, range contraction $T$}
\KwOut{near-optimal vector of $K$ parameters}
$\theta\leftarrow (\theta_{\max}+\theta_{\min})/2$\\
$\Delta\theta\leftarrow \theta_{\max}-\theta_{\min}$\\
$err\leftarrow [\infty]\times K$\\
$i\leftarrow 0$\\
\While{$\max_i err[i]>\epsilon$}{
    $u_i\leftarrow $unit vector with $1$ at element $i$, $0$ elsewhere\\
    $\Theta\leftarrow \{\theta+u_i\cdot\Delta\theta[i]\cdot(p/part-1)\,|\, p=0,1,\dots,2\, part\}$\\
    $\theta\leftarrow \arg\min_{\theta\in\Theta} \ell(\theta)$\\
    $err[i]\leftarrow \max_{\theta\in\Theta} \ell(\theta)-\min_{\theta\in\Theta} \ell(\theta)$\\
    $\Delta\theta[i]\leftarrow \Delta\theta[i]/T$\\
    $i\leftarrow (i+1)\, mod\, K$
}
\Return{$\theta$}
\caption{Parameter tuning}\label{coordinate descent}
\end{algorithm}
\par
In detail, we start from the center of parameter square bounds and cycle through parameters $i$. For each of those parameters, we consider the maximal range $\Delta\theta[i]$ in which to search for new solutions and partition it uniformly to $2\, part+1$ points. These form a set $\Theta$ of parameter perturbations, out of which we select the ones minimizing the loss. Finally, we contract the range in which to search parameters by dividing it with the value $K$ and move on to the next parameter. We stop cycling through parameters when the max loss difference between perturbations are smaller than the given tolerance. This process is outlined in Algorithm~\ref{coordinate descent}.

\par
If the objective $\ell(\theta)$ is Lipshitz continuous with Lipshitz constant $\sup \|\nabla\ell(\theta)\|<\infty$, it is easy to see that the division of the parameter permutation radius by $T$ every $K$ iterations lets the algorithm run in amortized time $O\big(K(\text{run }\ell(\theta))\log_T (\|\theta_{\max}-\theta_{\min}\|_\infty\sup \|\nabla\ell(\theta)\|)\big)$. For experiments in this work, and to reduce running time as much as possible, we empirically selected the parameters $part=2,T=2,\epsilon=0.01$, which yielded near-identical results to employing the more granular parameters $part=4,T=1.3,\epsilon=0.001$ in a randomly selected subset of $15$ experiment settings.

\section{Detailed Experiment Results}\label{results}
\begin{table}[htbp]
    \footnotesize
    \setlength\tabcolsep{.5pt}
    \begin{tabular}{l | r r | r r | r r | r r | r r | r r | r r | r r}
        ~&\multicolumn{2}{c|}{\textbf{None}}&\multicolumn{2}{c|}{\textbf{Mult}}&\multicolumn{2}{c|}{\textbf{LFPRO}}&\multicolumn{2}{c|}{\textbf{FairPers}}&\multicolumn{2}{c|}{\textbf{FairPers-C}}&\multicolumn{2}{c|}{\textbf{FairEdit}}&\multicolumn{2}{c|}{\textbf{FairEdit-C}}&\multicolumn{2}{c}{\textbf{FairWalk}}\\
        ~ & AUC & pRule & AUC & pRule & AUC & pRule & AUC & pRule & AUC & pRule & AUC & pRule & AUC & pRule & AUC & pRule\\
        \hline
        \multicolumn{15}{c}{Graph Filter PPR.85}\\
         \hline
ACM  & .71 & .48 & .71 & .74 & .74 & .92 & .73 & .95 & .76 & .95 & .75 & .81 & .74 & .92 & .71 & .81 \\
Amazon  & .81 & .01 & .79 & .82 & .59 & .93 & .75 & .99 & .73 & .88 & .68 & .98 & .72 & .83 & .81 & .03 \\
Ant  & .78 & .59 & .77 & .78 & .79 & .87 & .58 & .99 & .77 & .86 & .76 & .95 & .78 & .87 & .77 & .79 \\
Citeseer  & .84 & .13 & .83 & .74 & .77 & .73 & .69 & .91 & .82 & .83 & .87 & .43 & .82 & .82 & .84 & .16 \\
DBLP  & .92 & .75 & .92 & .77 & .92 & .86 & .86 & .60 & .86 & .60 & .89 & .66 & .89 & .66 & .92 & .09 \\
Facebook0  & .57 & .91 & .54 & .74 & .54 & .76 & .56 & .90 & .57 & .83 & .53 & .98 & .58 & .87  & .54 & .77 \\
Facebook686  & .52 & .94 & .51 & .74 & .50 & .70 & .48 & .93 & .52 & .90 & .50 & .99 & .52 & .90  & .53 & .95 \\
Log4J  & .78 & .77 & .77 & .75 & .76 & .83 & .62 & .99 & .68 & .89 & .74 & .94 & .76 & .92 & .77 & .68 \\
Maven  & 1.00 & .50 & 1.00 & .83 & 1.00 & .84 & .49 & .94 & .82 & .91 & .86 & .91 & .86 & .91 & 1.00 & .25 \\
Pubmed  & .90 & .45 & .82 & .77 & .81 & .97 & .64 & .99 & .78 & .83 & .86 & .69 & .79 & .86 & .88 & .65 \\
Squirrel  & .64 & .36 & .64 & .72 & .64 & .44 & .67 & .97 & .73 & .87 & .75 & .87 & .75 & .83 & .65 & .33 \\
Twitter  & .97 & .13 & .48 & .80 & .31 & .84 & .58 & .99 & .68 & .82 & .58 & .99 & .68 & .83 & .87 & .33 \\
    
        \hline
        \multicolumn{15}{c}{Graph Filter PPR.99}\\
         \hline
ACM  & .70 & .53 & .70 & .88 & .76 & .89 & .65 & .92 & .73 & .88 & .72 & .99 & .73 & .90 & .70 & .76 \\
Amazon  & .83 & .01 & .80 & .95 & .68 & .84 & .74 & .99 & .69 & .91 & .68 & 1.00 & .71 & .82 & .82 & .03 \\
Ant  & .71 & .92 & .70 & .94 & .70 & .96 & .64 & .99 & .68 & .94 & .54 & .99 & .57 & .96 & .72 & .76 \\
Citeseer  & .84 & .40 & .83 & .90 & .85 & .84 & .66 & .95 & .79 & .81 & .52 & 1.00 & .82 & .81 & .84 & .23 \\
DBLP  & .92 & .66 & .92 & .73 & .92 & .84 & .91 & .70 & .91 & .70 & .92 & .68 & .91 & .69 & .92 & .09 \\
Facebook0  & .56 & .93 & .56 & .96 & .56 & .96 & .58 & .97 & .59 & .94 & .53 & .97 & .47 & .88 & .54 & .77 \\
Facebook686  & .51 & .98 & .51 & .98 & .51 & .98 & .51 & .99 & .51 & .98 & .51 & .99 & .51 & .98 & .51 & .96  \\
Log4J  & .74 & .94 & .74 & .95 & .74 & .97 & .65 & .99 & .65 & .99 & .67 & .97 & .56 & .94 & .74 & .58 \\
Maven  & .98 & .77 & .98 & .99 & .98 & 1.00 & .46 & 1.00 & .93 & .80 & .87 & .83 & .89 & .80 & .98 & .21  \\
Pubmed  & .79 & .75 & .73 & .98 & .72 & .97 & .55 & 1.00 & .73 & .86 & .55 & 1.00 & .75 & .85 & .70 & .97 \\
Squirrel  & .63 & .46 & .63 & .84 & .63 & .59 & .64 & .96 & .70 & .72 & .71 & .90 & .70 & .74 & .63 & .52 \\
Twitter  & .74 & .69 & .57 & .98 & .39 & .98 & .59 & .99 & .66 & .85 & .59 & .99 & .69 & .81 & .73 & .15 \\
    
        \hline
        \multicolumn{15}{c}{Graph Filter HK3}\\
         \hline
ACM  & .68 & .49 & .68 & .41 & .66 & .91 & .72 & .99 & .73 & .94 & .71 & .96 & .73 & .93 & .68 & .60 \\
Amazon  & .88 & .01 & .87 & .49 & .46 & .84 & .85 & .92 & .89 & .79 & .73 & .97 & .87 & .88 & .88 & .03 \\
Ant  & .74 & .48 & .73 & .41 & .67 & .88 & .75 & .95 & .77 & .86 & .67 & .99 & .75 & .89 & .74 & .86 \\
Citeseer  & .77 & .12 & .77 & .42 & .63 & .89 & .63 & .96 & .73 & .89 & .46 & .99 & .75 & .87 & .77 & .15 \\
DBLP  & .89 & .74 & .89 & .95 & .89 & .96 & .89 & .75 & .88 & .75 & .90 & .75 & .90 & .75 & .89 & .07 \\
Facebook0  & .56 & .89 & .52 & .41 & .41 & .48 & .53 & .87 & .52 & .79 & .47 & .95 & .49 & .88 & .54 & .73 \\
Facebook686  & .53 & .98 & .51 & .38 & .52 & .53 & .48 & .85 & .47 & .87 & .50 & .96 & .53 & .88 & .54 & .91 \\
Log4J  & .70 & .70 & .70 & .39 & .66 & .85 & .64 & .97 & .67 & .97 & .65 & .99 & .69 & .96 & .70 & .60 \\
Maven  & .97 & .33 & .97 & .45 & .94 & .95 & .77 & .97 & .77 & .97 & .80 & .97 & .80 & .97 & .97 & .26 \\
Pubmed  & .83 & .36 & .77 & .42 & .67 & .95 & .71 & .97 & .75 & .90 & .58 & .98 & .75 & .87 & .82 & .59 \\
Squirrel  & .64 & .40 & .64 & .31 & .64 & .47 & .67 & .92 & .67 & .83 & .68 & .93 & .68 & .87 & .64 & .32 \\
Twitter  & .90 & .05 & .69 & .43 & .24 & .75 & .66 & .90 & .78 & .77 & .58 & .99 & .71 & .86 & .87 & .15 \\
        \hline
        \multicolumn{15}{c}{Graph Filter HK7}\\
         \hline
         ACM  & .68 & .49 & .68 & .53 & .68 & .98 & .72 & .99 & .73 & .93 & .72 & .96 & .73 & .94 & .68 & .60 \\
Amazon  & .88 & .01 & .87 & .64 & .58 & .96 & .81 & .97 & .87 & .85 & .75 & .97 & .87 & .86 & .88 & .03 \\
Ant  & .74 & .48 & .74 & .54 & .72 & .94 & .61 & .98 & .77 & .85 & .66 & .99 & .76 & .88 & .74 & .86 \\
Citeseer  & .77 & .12 & .77 & .55 & .66 & .82 & .69 & .95 & .73 & .89 & .46 & 1.00 & .75 & .86 & .77 & .15 \\
DBLP  & .89 & .74 & .89 & .85 & .89 & .93 & .90 & .69 & .85 & .69 & .90 & .69 & .90 & .69 & .89 & .07 \\
Facebook0  & .56 & .89 & .53 & .58 & .51 & .62 & .55 & .91 & .52 & .85 & .47 & .99 & .49 & .89 & .54 & .73 \\
Facebook686  & .53 & .98 & .51 & .55 & .51 & .52 & .50 & .86 & .50 & .86 & .51 & .96 & .52 & .86 & .54 & .91 \\
Log4J  & .70 & .70 & .70 & .52 & .69 & .89 & .65 & .96 & .68 & .93 & .64 & .98 & .68 & .96 & .70 & .60 \\
Maven  & .97 & .33 & .97 & .62 & .97 & .92 & .78 & .96 & .78 & .96 & .84 & .96 & .84 & .96 & .97 & .26 \\
Pubmed  & .83 & .36 & .78 & .57 & .73 & .83 & .65 & 1.00 & .75 & .92 & .58 & .99 & .74 & .85 & .82 & .59 \\
Squirrel  & .64 & .40 & .64 & .42 & .64 & .44 & .67 & .84 & .67 & .74 & .68 & .91 & .68 & .84 & .64 & .32 \\
Twitter  & .90 & .05 & .71 & .59 & .42 & .97 & .61 & .94 & .73 & .78 & .58 & .99 & .69 & .83 & .87 & .15 \\
    \end{tabular}
    \caption{Experiments for base graph filters without post-processing.}
\end{table}
\clearpage
\begin{table}[htbp]
    \footnotesize
    \setlength\tabcolsep{.5pt}
    \begin{tabular}{l | r r | r r | r r | r r | r r | r r | r r | r r}
        ~&\multicolumn{2}{c|}{\textbf{None}}&\multicolumn{2}{c|}{\textbf{Mult}}&\multicolumn{2}{c|}{\textbf{LFPRO}}&\multicolumn{2}{c|}{\textbf{FairPers}}&\multicolumn{2}{c|}{\textbf{FairPers-C}}&\multicolumn{2}{c|}{\textbf{FairEdit}}&\multicolumn{2}{c|}{\textbf{FairEdit-C}}&\multicolumn{2}{c}{\textbf{FairWalk}}\\
        ~ & AUC & pRule & AUC & pRule & AUC & pRule & AUC & pRule & AUC & pRule & AUC & pRule & AUC & pRule & AUC & pRule\\
         \hline
        \multicolumn{15}{c}{Graph Filter PPR.85S}\\
         \hline
         ACM  & .70 & .32 & .70 & .80 & .72 & .90 & .62 & .99 & .69 & .87 & .69 & .87 & .68 & .83  & .70 & .82\\
Amazon  & .79 & .01 & .76 & .82 & .57 & .93 & .71 & .96 & .82 & .82 & .67 & 1.00 & .79 & .82 & .81 & .03 \\
Ant  & .78 & .60 & .77 & .77 & .80 & .86 & .78 & .97 & .78 & .89 & .77 & .96 & .77 & .86 & .77 & .79 \\
Citeseer  & .84 & .14 & .83 & .78 & .80 & .75 & .85 & .91 & .85 & .84 & .84 & .40 & .85 & .84 & .84 & .15 \\
DBLP  & .90 & .58 & .90 & .81 & .89 & .77 & .89 & 1.00 & .76 & .79 & .88 & .77 & .88 & .76 & .92 & .09 \\
Facebook0  & .60 & .94 & .56 & .72 & .57 & .73 & .57 & .86 & .57 & .81 & .59 & .96 & .62 & .91 & .59 & .96 \\
Facebook686  & .55 & .93 & .52 & .72 & .51 & .68 & .53 & .93 & .53 & .90 & .53 & .97 & .56 & .88 & .55 & .94 \\
Log4J  & .76 & .76 & .76 & .73 & .74 & .81 & .75 & .98 & .75 & .97 & .75 & .91 & .75 & .86 & .78 & .78 \\
Maven  & 1.00 & .48 & 1.00 & .81 & 1.00 & .83 & .96 & .99 & .96 & .98 & 1.00 & .97 & 1.00 & .97 & 1.00 & .90 \\
Pubmed  & .92 & .46 & .84 & .74 & .80 & .96 & .88 & .98 & .88 & .93 & .91 & .74 & .91 & .86 & .88 & .65 \\
Squirrel  & .64 & .37 & .64 & .75 & .64 & .47 & .67 & .95 & .66 & .90 & .66 & .94 & .66 & .81 & .64 & .38 \\
Twitter  & .98 & .13 & .34 & .76 & .31 & .78 & .17 & .92 & .50 & .92 & .45 & .99 & 1.00 & .83 & .95 & .31 \\
        \hline
        \multicolumn{15}{c}{Graph Filter PPR.99S}\\
         \hline
ACM  & .68 & .37 & .68 & .91 & .73 & .90 & .56 & .99 & .61 & .87 & .68 & .89 & .67 & .81 & .68 & .86 \\
Amazon  & .79 & .01 & .77 & .96 & .64 & .83 & .61 & .96 & .64 & .84 & .65 & 1.00 & .79 & .80 & .79 & .03 \\
Ant  & .78 & .93 & .77 & .95 & .77 & .97 & .77 & .98 & .74 & .93 & .36 & .99 & .58 & .96 & .72 & .76 \\
Citeseer  & .85 & .41 & .84 & .92 & .86 & .86 & .84 & .94 & .85 & .81 & .84 & 1.00 & .85 & .81 & .85 & .40 \\
DBLP  & .89 & .54 & .89 & .79 & .88 & .74 & .84 & 1.00 & .90 & .78 & .88 & .63 & .88 & .72 & .89 & .36 \\
Facebook0  & .64 & .97 & .61 & .96 & .61 & .96 & .60 & .99 & .59 & .99 & .53 & 1.00 & .54 & .86 & .64 & .98  \\
Facebook686  & .55 & .98 & .52 & .97 & .52 & .97 & .52 & .98 & .51 & .97 & .55 & 1.00 & .48 & .98 & .51 & .96\\
Log4J  & .76 & .90 & .74 & .94 & .74 & .97 & .75 & .98 & .75 & .96 & .60 & .99 & .52 & .98 & .55 & .54 \\
Maven  & 1.00 & .73 & 1.00 & .99 & 1.00 & .99 & .97 & .98 & .96 & .93 & .98 & .97 & .99 & .94 & 1.00 & .95 \\
Pubmed  & .91 & .75 & .84 & .98 & .83 & .97 & .89 & .99 & .88 & .90 & .89 & .98 & .93 & .84 & .70 & .97 \\
Squirrel  & .64 & .47 & .64 & .89 & .64 & .64 & .62 & .93 & .62 & .77 & .62 & .95 & .62 & .78 & .64 & .49 \\
Twitter  & .98 & .60 & .43 & .97 & .40 & .97 & .65 & .99 & .96 & .83 & .55 & 1.00 & 1.00 & .80 & .73 & .15 \\
\hline
        \multicolumn{15}{c}{Graph Filter HK3S}\\
         \hline
         ACM  & .67 & .40 & .67 & .44 & .65 & .94 & .66 & .96 & .66 & .87 & .69 & .99 & .66 & .92 & .67 & .81 \\
Amazon  & .87 & .01 & .86 & .49 & .44 & .85 & .15 & .98 & .78 & .91 & .84 & 1.00 & .62 & .92 & .88 & .03 \\
Ant  & .74 & .48 & .73 & .40 & .70 & .86 & .73 & .93 & .74 & .84 & .75 & .98 & .76 & .88 & .74 & .86 \\
Citeseer  & .77 & .12 & .77 & .42 & .64 & .90 & .73 & .95 & .81 & .89 & .59 & .99 & .85 & .88 & .77 & .13 \\
DBLP  & .88 & .87 & .88 & .92 & .88 & .91 & .85 & 1.00 & .85 & .98 & .90 & .99 & .90 & .98 & .88 & .26 \\
Facebook0  & .58 & .90 & .52 & .39 & .41 & .45 & .53 & .78 & .58 & .84 & .53 & .95 & .59 & .86 & .54 & .73 \\
Facebook686  & .55 & .99 & .51 & .36 & .51 & .49 & .52 & .79 & .53 & .91 & .51 & .93 & .54 & .93 & .55 & .95  \\
Log4J  & .70 & .69 & .69 & .39 & .64 & .86 & .62 & .96 & .61 & .92 & .76 & .98 & .76 & .96 & .70 & .83 \\
Maven  & .97 & .27 & .97 & .43 & .94 & .96 & .43 & 1.00 & .43 & 1.00 & .99 & 1.00 & .99 & .99 & .97 & .26 \\
Pubmed  & .83 & .36 & .77 & .41 & .66 & .95 & .87 & .97 & .81 & .89 & .78 & .98 & .90 & .87 & .82 & .59 \\
Squirrel  & .64 & .41 & .64 & .32 & .64 & .49 & .66 & .91 & .65 & .85 & .67 & .96 & .67 & .91 & .64 & .38 \\
Twitter  & .90 & .05 & .68 & .41 & .18 & .67 & .02 & .94 & .51 & .96 & .45 & .98 & 1.00 & .88 & .88 & .10 \\
\hline
        \multicolumn{15}{c}{Graph Filter HK7S}\\
         \hline
         ACM  & .67 & .38 & .67 & .57 & .66 & .97 & .67 & .96 & .66 & .83 & .68 & .99 & .66 & .91 & .67 & .76 \\
Amazon  & .86 & .01 & .85 & .64 & .56 & .96 & .32 & .98 & .84 & .87 & .70 & 1.00 & .69 & .87 & .88 & .03 \\
Ant  & .74 & .48 & .74 & .54 & .72 & .94 & .76 & .99 & .76 & .84 & .75 & .98 & .75 & .88 & .74 & .79 \\
Citeseer  & .77 & .12 & .77 & .55 & .65 & .84 & .85 & .95 & .77 & .89 & .70 & .99 & .85 & .87 & .77 & .13 \\
DBLP  & .88 & .77 & .88 & .91 & .88 & .82 & .85 & .98 & .85 & .92 & .90 & .97 & .90 & .96 & .88 & .32 \\
Facebook0  & .59 & .91 & .54 & .55 & .52 & .57 & .57 & .86 & .60 & .87 & .51 & 1.00 & .61 & .85 & .54 & .73 \\
Facebook686  & .55 & .99 & .52 & .53 & .50 & .50 & .55 & .95 & .56 & .86 & .53 & .94 & .55 & .96 & .55 & .95 \\
Log4j  & .70 & .69 & .69 & .53 & .67 & .90 & .72 & .93 & .70 & .86 & .75 & .98 & .76 & .95 & .70 & .86 \\
Maven  & .97 & .25 & .97 & .60 & .97 & .89 & .70 & .99 & .70 & .99 & .99 & .98 & .99 & .98 & .97 & .26 \\
Pubmed  & .83 & .36 & .78 & .56 & .67 & .96 & .86 & 1.00 & .88 & .87 & .79 & .98 & .89 & .86 & .82 & .59 \\
Squirel  & .64 & .42 & .64 & .43 & .64 & .45 & .67 & .92 & .65 & .79 & .66 & .96 & .67 & .88 & .64 & .39 \\
Twitter  & .90 & .05 & .71 & .56 & .35 & .86 & .41 & .96 & .87 & .83 & .47 & .98 & 1.00 & .86 & .88 & .10 \\
    \end{tabular}
    \caption{Experiments for base graph filters with sweep ratio post-processing.}
\end{table}

\clearpage

\section{Ablation Study}\label{ablation study}
In this appendix we investigate the effect of adding granular retention of original posteriors in Equation~\ref{FairEdit}. To this end, we perform additional experiments on the FairEdit and FairEdit-C approaches, in which we fix the parameter value $a_0=0$ to remove explicit retention of original priors---and hence original posteriors. In Table~\ref{tab:ablation} we present the outcome of the two new variations under the names FairEdit0 and FairEdit0-C respectively. We remind that FairPers0 variants differ from FairEdit in that they use error-based instead of difference-based skewing of posteriors and KL-divergence from original posteriors as a training objective instead of mean absolute error. And FairPers variants further differ from FairPers0 in that they explicitly introduce a degree of freedom towards maintain original posteriors.
\par
Table~\ref{tab:comparison ablation} summarizes a multiway comparison between prior editing approaches. Overall, all three types of prior editing yield similar average AUC and pRule values and Nemenyi ranks when used to satisfy the 80\% pRule constraint. On the other hand FairEdit0 manages to improve AUC by only a small margin ($\sim$1\%) compared to FairPers when maximizing fairness. This indicates that concerns over posterior outliers and the interpretability of stochastic surrogate models are valid yet not too prevalent in the graphs we experiment on. At the same time, small improvements do not retract from the added value of accounting for these phenomena, as posterior outliers could arise in different graphs and the more intuitive interpretation of prior editing parameters can prove useful in terms of explainability. Finally, FairEdit's 3-4\% AUC improvement compared to FairPers can be in large part be attributed to partially retaining priors, as recommended by Theorem~\ref{anyedit}.

\begin{table}[b!]
\footnotesize
\centering
    \begin{tabular}{l | r r r | r r r}
        & \multicolumn{3}{c|}{\textbf{No Post-processing}} & \multicolumn{3}{c}{\textbf{Sweep Ratio}}\\
          & \textbf{AUC} & \textbf{pRule} & \textbf{pRule $\geq$ 80\%} & \textbf{AUC} & \textbf{pRule} & \textbf{pRule $\geq$ 80\%} \\
          \hline
         FairPers & .66 (4.5) & .93 (2.5) & .92 & .66 (5.3) & .95 (2.7) & .96\\
         FairPers-C & .72 (3.2) & .86 (4.6) & .79 & .72 (4.7) & .88 (4.4) & .92\\
         FairEdit & .69 (4.0) & .92 (2.4) & .88 & .70 (4.6) & .94 (2.4) & .92\\
         FairEdit-C & .72 (2.7) & .86 (4.4) & .90 & .77 (3.3) & .88 (4.5) & .94\\
         FairEdit0 & .67 (4.1) & .93 (2.3) & .88 & .67 (4.6) & .95 (2.1) & .92\\
         FairEdit0-C & .72 (2.5) & .86 (4.6) & .90 & .77 (3.3) & .87 (4.9) & .94\\
    \end{tabular} 
    \label{tab:comparison ablation}
    \caption{Average AUC, pRule and fraction of experiments achieving $80\%$ pRule for prior editing approaches (higher are better). Average Nemenyi ranks for measures in parenthesis (smaller are better), where rank differences greater than 1.3 are statistically significant. Different results are presented depending on the post-processeing of the base graph filter.}
\end{table}

\begin{table}[tbp]
    \centering
    \footnotesize
    \setlength\tabcolsep{.5pt}
    \begin{tabular}{l | r r | r r | r r | r r}
    & \multicolumn{2}{c|}{FairEdit0} & \multicolumn{2}{c|}{FairEdit0-C} & \multicolumn{2}{c|}{FairEdit0} & \multicolumn{2}{c}{FairEdit0-C}\\
    & AUC & pRule & AUC & pRule & AUC & pRule & AUC & pRule \\
    \hline
    & \multicolumn{4}{c|}{PPR.85} & \multicolumn{4}{c}{PPR.85S}\\ 
    \hline
Acm  & .75 & .81 & .73 & .94 & .69 & .82 & .68 & .83 \\
Amazon  & .69 & .98 & .73 & .83 & .64 & 1.00 & .80 & .82 \\
Ant  & .75 & .96 & .78 & .88 & .77 & .97 & .77 & .86 \\
Citeseer  & .87 & .42 & .82 & .82 & .84 & .40 & .85 & .83 \\
Dblp  & .89 & .66 & .89 & .66 & .88 & .77 & .88 & .76 \\
Facebook0  & .54 & .98 & .58 & .88 & .54 & .99 & .62 & .91 \\
Facebook686  & .50 & .99 & .52 & .91 & .50 & 1.00 & .56 & .89 \\
Log4j  & .74 & .93 & .76 & .92 & .75 & .91 & .75 & .86 \\
Maven  & .85 & .91 & .85 & .91 & 1.00 & .97 & 1.00 & .97 \\
Pubmed  & .86 & .69 & .79 & .86 & .91 & .74 & .91 & .85 \\
Squirel  & .75 & .87 & .75 & .85 & .66 & .94 & .66 & .81 \\
Twitter  & .58 & .99 & .68 & .82 & .58 & 1.00 & 1.00 & .80 \\
    \hline
    & \multicolumn{4}{c|}{PPR.99} & \multicolumn{4}{c}{PPR.99S}\\ 
    \hline
Acm  & .73 & .97 & .73 & .90 & .67 & .87 & .68 & .81 \\
Amazon  & .69 & 1.00 & .72 & .81 & .64 & 1.00 & .79 & .81 \\
Ant  & .53 & .99 & .54 & .97 & .31 & .99 & .39 & .97 \\
Citeseer  & .48 & .99 & .82 & .81 & .47 & 1.00 & .85 & .81 \\
Dblp  & .92 & .68 & .91 & .69 & .88 & .63 & .88 & .72 \\
Facebook0  & .52 & .99 & .47 & .88 & .53 & 1.00 & .51 & .85 \\
Facebook686  & .51 & .99 & .51 & .98 & .51 & 1.00 & .48 & .98 \\
Log4j  & .66 & .97 & .56 & .94 & .61 & .99 & .50 & .98 \\
Maven  & .87 & .83 & .89 & .80 & .98 & .97 & .99 & .94 \\
Pubmed  & .54 & 1.00 & .75 & .85 & .89 & .98 & .93 & .84 \\
Squirel  & .71 & .90 & .70 & .73 & .62 & .95 & .62 & .78 \\
Twitter  & .59 & .99 & .69 & .81 & .55 & 1.00 & 1.00 & .80 \\
    \hline
    & \multicolumn{4}{c|}{HK3} & \multicolumn{4}{c}{HK3S}\\ 
    \hline
Acm  & .72 & .95 & .73 & .93 & .62 & 1.00 & .66 & .91 \\
Amazon  & .76 & .99 & .88 & .87 & .67 & 1.00 & .88 & .84 \\
Ant  & .70 & 1.00 & .76 & .88 & .67 & .99 & .76 & .88 \\
Citeseer  & .45 & 1.00 & .76 & .86 & .46 & 1.00 & .85 & .87 \\
Dblp  & .90 & .75 & .90 & .75 & .90 & .99 & .90 & .98 \\
Facebook0  & .47 & 1.00 & .52 & .81 & .47 & 1.00 & .59 & .82 \\
Facebook686  & .50 & .98 & .53 & .89 & .47 & .97 & .55 & .93 \\
Log4j  & .65 & .99 & .69 & .96 & .76 & .98 & .76 & .96 \\
Maven  & .79 & .97 & .79 & .97 & .99 & 1.00 & .99 & .99 \\
Pubmed  & .53 & 1.00 & .76 & .86 & .53 & 1.00 & .91 & .86 \\
Squirel  & .68 & .93 & .68 & .87 & .67 & .96 & .67 & .91 \\
Twitter  & .59 & .99 & .74 & .83 & .64 & 1.00 & 1.00 & .84 \\
    \hline
    & \multicolumn{4}{c|}{HK7} & \multicolumn{4}{c}{HK7S}\\ 
    \hline
Acm  & .72 & .96 & .73 & .93 & .62 & 1.00 & .66 & .91 \\
Amazon  & .77 & .99 & .88 & .85 & .67 & 1.00 & .87 & .83 \\
Ant  & .65 & .98 & .76 & .88 & .68 & 1.00 & .75 & .88 \\
Citeseer  & .44 & 1.00 & .76 & .85 & .46 & 1.00 & .85 & .86 \\
Dblp  & .90 & .69 & .90 & .69 & .90 & .97 & .90 & .96 \\
Facebook0  & .47 & .99 & .54 & .83 & .47 & 1.00 & .63 & .84 \\
Facebook686  & .50 & .99 & .52 & .86 & .47 & .98 & .55 & .96 \\
Log4j  & .64 & .98 & .68 & .96 & .75 & .97 & .76 & .94 \\
Maven  & .84 & .96 & .84 & .96 & .99 & .98 & .99 & .98 \\
Pubmed  & .54 & 1.00 & .75 & .85 & .53 & 1.00 & .90 & .85 \\
Squirel  & .68 & .92 & .68 & .84 & .66 & .96 & .67 & .88 \\
Twitter  & .58 & .99 & .71 & .81 & .64 & 1.00 & 1.00 & .83 \\
    \end{tabular}
    \label{tab:ablation}
    \caption{Experiments of FairEdit with $a_0=0$ for both types of post-processing }
\end{table}

\clearpage

\bibliography{bibliography}{}

\begin{thebibliography}{74}
\providecommand{\natexlab}[1]{#1}
\providecommand{\url}[1]{\texttt{#1}}
\expandafter\ifx\csname urlstyle\endcsname\relax
  \providecommand{\doi}[1]{doi: #1}\else
  \providecommand{\doi}{doi: \begingroup \urlstyle{rm}\Url}\fi

\bibitem[lin()]{linqs}
{LINQS} statistical relational learning group datasets.
\newblock \url{https://linqs.soe.ucsc.edu/data}.

\bibitem[Andersen et~al.(2006)Andersen, Chung, and Lang]{andersen2006local}
Reid Andersen, Fan Chung, and Kevin Lang.
\newblock Local graph partitioning using pagerank vectors.
\newblock In \emph{2006 47th Annual IEEE Symposium on Foundations of Computer
  Science (FOCS'06)}, pages 475--486. IEEE, 2006.

\bibitem[Andersen et~al.(2007{\natexlab{a}})Andersen, Chung, and
  Lang]{andersen2007local}
Reid Andersen, Fan Chung, and Kevin Lang.
\newblock Local partitioning for directed graphs using pagerank.
\newblock In \emph{International Workshop on Algorithms and Models for the
  Web-Graph}, pages 166--178. Springer, 2007{\natexlab{a}}.

\bibitem[Andersen et~al.(2007{\natexlab{b}})Andersen, Chung, and
  Lang]{andersen2007using}
Reid Andersen, Fan Chung, and Kevin Lang.
\newblock Using pagerank to locally partition a graph.
\newblock \emph{Internet Mathematics}, 4\penalty0 (1):\penalty0 35--64,
  2007{\natexlab{b}}.

\bibitem[Bahmani et~al.(2010)Bahmani, Chowdhury, and Goel]{bahmani2010fast}
Bahman Bahmani, Abdur Chowdhury, and Ashish Goel.
\newblock Fast incremental and personalized pagerank.
\newblock \emph{Proceedings of the VLDB Endowment}, 4\penalty0 (3), 2010.

\bibitem[Benelallam et~al.(2019)Benelallam, Harrand, Soto-Valero, Baudry, and
  Barais]{benelallam2019maven}
Amine Benelallam, Nicolas Harrand, C{\'e}sar Soto-Valero, Benoit Baudry, and
  Olivier Barais.
\newblock The maven dependency graph: a temporal graph-based representation of
  maven central.
\newblock In \emph{2019 IEEE/ACM 16th International Conference on Mining
  Software Repositories (MSR)}, pages 344--348. IEEE, 2019.

\bibitem[Beutel et~al.(2019)Beutel, Chen, Doshi, Qian, Wei, Wu, Heldt, Zhao,
  Hong, Chi, et~al.]{beutel2019fairness}
Alex Beutel, Jilin Chen, Tulsee Doshi, Hai Qian, Li~Wei, Yi~Wu, Lukasz Heldt,
  Zhe Zhao, Lichan Hong, Ed~H Chi, et~al.
\newblock Fairness in recommendation ranking through pairwise comparisons.
\newblock In \emph{Proceedings of the 25th ACM SIGKDD International Conference
  on Knowledge Discovery \& Data Mining}, pages 2212--2220, 2019.

\bibitem[Biddle(2006)]{biddle2006adverse}
Dan Biddle.
\newblock \emph{Adverse impact and test validation: A practitioner's guide to
  valid and defensible employment testing}.
\newblock Gower Publishing, Ltd., 2006.

\bibitem[Biega et~al.(2018)Biega, Gummadi, and Weikum]{biega2018equity}
Asia~J Biega, Krishna~P Gummadi, and Gerhard Weikum.
\newblock Equity of attention: Amortizing individual fairness in rankings.
\newblock In \emph{The 41st international acm sigir conference on research \&
  development in information retrieval}, pages 405--414, 2018.

\bibitem[Bj{\"o}rner and Lov{\'a}sz(1992)]{bjorner1992chip}
Anders Bj{\"o}rner and L{\'a}szl{\'o} Lov{\'a}sz.
\newblock Chip-firing games on directed graphs.
\newblock \emph{Journal of algebraic combinatorics}, 1\penalty0 (4):\penalty0
  305--328, 1992.

\bibitem[Bose and Hamilton(2019)]{bose2019compositional}
Avishek Bose and William Hamilton.
\newblock Compositional fairness constraints for graph embeddings.
\newblock In \emph{International Conference on Machine Learning}, pages
  715--724, 2019.

\bibitem[Calders and Verwer(2010)]{calders2010three}
Toon Calders and Sicco Verwer.
\newblock Three naive bayes approaches for discrimination-free classification.
\newblock \emph{Data Mining and Knowledge Discovery}, 21\penalty0 (2):\penalty0
  277--292, 2010.

\bibitem[Cassel and Bindman(2019)]{cassel2019risk}
Christine Cassel and Andrew Bindman.
\newblock Risk, benefit, and fairness in a big data world.
\newblock \emph{Jama}, 322\penalty0 (2):\penalty0 105--106, 2019.

\bibitem[Chalupa(2017)]{chalupa2017memetic}
David Chalupa.
\newblock A memetic algorithm for the minimum conductance graph partitioning
  problem.
\newblock \emph{arXiv preprint arXiv:1704.02854}, 2017.

\bibitem[Chen et~al.(2020)Chen, Wei, Huang, Ding, and Li]{chen2020simple}
Ming Chen, Zhewei Wei, Zengfeng Huang, Bolin Ding, and Yaliang Li.
\newblock Simple and deep graph convolutional networks.
\newblock In \emph{International Conference on Machine Learning}, pages
  1725--1735. PMLR, 2020.

\bibitem[Chouldechova(2017)]{chouldechova2017fair}
Alexandra Chouldechova.
\newblock Fair prediction with disparate impact: A study of bias in recidivism
  prediction instruments.
\newblock \emph{Big data}, 5\penalty0 (2):\penalty0 153--163, 2017.

\bibitem[Chung(2005)]{chung2005laplacians}
Fan Chung.
\newblock Laplacians and the cheeger inequality for directed graphs.
\newblock \emph{Annals of Combinatorics}, 9\penalty0 (1):\penalty0 1--19, 2005.

\bibitem[Chung and Graham(1997)]{chung1997spectral}
Fan~RK Chung and Fan~Chung Graham.
\newblock \emph{Spectral graph theory}.
\newblock Number~92. American Mathematical Soc., 1997.

\bibitem[Dai and Wang(2020)]{dai2020fairgnn}
Enyan Dai and Suhang Wang.
\newblock Fairgnn: Eliminating the discrimination in graph neural networks with
  limited sensitive attribute information.
\newblock \emph{arXiv preprint arXiv:2009.01454}, 2020.

\bibitem[Dem{\v{s}}ar(2006)]{demvsar2006statistical}
Janez Dem{\v{s}}ar.
\newblock Statistical comparisons of classifiers over multiple data sets.
\newblock \emph{Journal of Machine learning research}, 7\penalty0
  (Jan):\penalty0 1--30, 2006.

\bibitem[Derrac et~al.(2011)Derrac, Garc{\'\i}a, Molina, and
  Herrera]{derrac2011practical}
Joaqu{\'\i}n Derrac, Salvador Garc{\'\i}a, Daniel Molina, and Francisco
  Herrera.
\newblock A practical tutorial on the use of nonparametric statistical tests as
  a methodology for comparing evolutionary and swarm intelligence algorithms.
\newblock \emph{Swarm and Evolutionary Computation}, 1\penalty0 (1):\penalty0
  3--18, 2011.

\bibitem[Dong et~al.(2020)Dong, Chen, Feng, He, Bi, Ding, and
  Cui]{dong2020equivalence}
Hande Dong, Jiawei Chen, Fuli Feng, Xiangnan He, Shuxian Bi, Zhaolin Ding, and
  Peng Cui.
\newblock On the equivalence of decoupled graph convolution network and label
  propagation.
\newblock \emph{arXiv preprint arXiv:2010.12408}, 2020.

\bibitem[Feldman et~al.(2015)Feldman, Friedler, Moeller, Scheidegger, and
  Venkatasubramanian]{feldman2015certifying}
Michael Feldman, Sorelle~A Friedler, John Moeller, Carlos Scheidegger, and
  Suresh Venkatasubramanian.
\newblock Certifying and removing disparate impact.
\newblock In \emph{proceedings of the 21th ACM SIGKDD international conference
  on knowledge discovery and data mining}, pages 259--268, 2015.

\bibitem[Finkel and Kelley(2006)]{finkel2006additive}
Daniel~E Finkel and CT~Kelley.
\newblock Additive scaling and the direct algorithm.
\newblock \emph{Journal of Global Optimization}, 36\penalty0 (4):\penalty0
  597--608, 2006.

\bibitem[Fortunato and Hric(2016)]{fortunato2016community}
Santo Fortunato and Darko Hric.
\newblock Community detection in networks: A user guide.
\newblock \emph{Physics reports}, 659:\penalty0 1--44, 2016.

\bibitem[Garcia and Herrera(2008)]{garcia2008extension}
Salvador Garcia and Francisco Herrera.
\newblock An extension on``statistical comparisons of classifiers over multiple
  data sets''for all pairwise comparisons.
\newblock \emph{Journal of Machine Learning Research}, 9\penalty0
  (Dec):\penalty0 2677--2694, 2008.

\bibitem[Gavili and Zhang(2017)]{gavili2017shift}
Adnan Gavili and Xiao-Ping Zhang.
\newblock On the shift operator, graph frequency, and optimal filtering in
  graph signal processing.
\newblock \emph{IEEE Transactions on Signal Processing}, 65\penalty0
  (23):\penalty0 6303--6318, 2017.

\bibitem[Getoor(2005)]{getoor2005link}
Lise Getoor.
\newblock Link-based classification.
\newblock In \emph{Advanced methods for knowledge discovery from complex data},
  pages 189--207. Springer, 2005.

\bibitem[Hanley and McNeil(1982)]{hanley1982meaning}
James~A Hanley and Barbara~J McNeil.
\newblock The meaning and use of the area under a receiver operating
  characteristic (roc) curve.
\newblock \emph{Radiology}, 143\penalty0 (1):\penalty0 29--36, 1982.

\bibitem[Hric et~al.(2014)Hric, Darst, and Fortunato]{hric2014community}
Darko Hric, Richard~K Darst, and Santo Fortunato.
\newblock Community detection in networks: Structural communities versus ground
  truth.
\newblock \emph{Physical Review E}, 90\penalty0 (6):\penalty0 062805, 2014.

\bibitem[Hric et~al.(2016)Hric, Peixoto, and Fortunato]{hric2016network}
Darko Hric, Tiago~P Peixoto, and Santo Fortunato.
\newblock Network structure, metadata, and the prediction of missing nodes and
  annotations.
\newblock \emph{Physical Review X}, 6\penalty0 (3):\penalty0 031038, 2016.

\bibitem[Huang et~al.(2020)Huang, He, Singh, Lim, and
  Benson]{huang2020combining}
Qian Huang, Horace He, Abhay Singh, Ser-Nam Lim, and Austin~R Benson.
\newblock Combining label propagation and simple models out-performs graph
  neural networks.
\newblock \emph{arXiv preprint arXiv:2010.13993}, 2020.

\bibitem[Isinkaye et~al.(2015)Isinkaye, Folajimi, and
  Ojokoh]{isinkaye2015recommendation}
FO~Isinkaye, YO~Folajimi, and BA~Ojokoh.
\newblock Recommendation systems: Principles, methods and evaluation.
\newblock \emph{Egyptian Informatics Journal}, 16\penalty0 (3):\penalty0
  261--273, 2015.

\bibitem[Kamiran and Calders(2012)]{kamiran2012data}
Faisal Kamiran and Toon Calders.
\newblock Data preprocessing techniques for classification without
  discrimination.
\newblock \emph{Knowledge and Information Systems}, 33\penalty0 (1):\penalty0
  1--33, 2012.

\bibitem[Kipf and Welling(2016)]{kipf2016semi}
Thomas~N Kipf and Max Welling.
\newblock Semi-supervised classification with graph convolutional networks.
\newblock \emph{arXiv preprint arXiv:1609.02907}, 2016.

\bibitem[Kleinberg et~al.(2018)Kleinberg, Ludwig, Mullainathan, and
  Rambachan]{kleinberg2018algorithmic}
Jon Kleinberg, Jens Ludwig, Sendhil Mullainathan, and Ashesh Rambachan.
\newblock Algorithmic fairness.
\newblock In \emph{Aea papers and proceedings}, volume 108, pages 22--27, 2018.

\bibitem[Klicpera et~al.(2018)Klicpera, Bojchevski, and
  G{\"u}nnemann]{klicpera2018predict}
Johannes Klicpera, Aleksandar Bojchevski, and Stephan G{\"u}nnemann.
\newblock Predict then propagate: Graph neural networks meet personalized
  pagerank.
\newblock \emph{arXiv preprint arXiv:1810.05997}, 2018.

\bibitem[Kloster and Gleich(2014)]{kloster2014heat}
Kyle Kloster and David~F Gleich.
\newblock Heat kernel based community detection.
\newblock In \emph{Proceedings of the 20th ACM SIGKDD international conference
  on Knowledge discovery and data mining}, pages 1386--1395, 2014.

\bibitem[Krasanakis et~al.(2018)Krasanakis, Spyromitros-Xioufis, Papadopoulos,
  and Kompatsiaris]{krasanakis2018adaptive}
Emmanouil Krasanakis, Eleftherios Spyromitros-Xioufis, Symeon Papadopoulos, and
  Yiannis Kompatsiaris.
\newblock Adaptive sensitive reweighting to mitigate bias in fairness-aware
  classification.
\newblock In \emph{Proceedings of the 2018 World Wide Web Conference}, pages
  853--862, 2018.

\bibitem[Krasanakis et~al.(2020{\natexlab{a}})Krasanakis, Papadopoulos, and
  Kompatsiaris]{krasanakis2020applying}
Emmanouil Krasanakis, Symeon Papadopoulos, and Ioannis Kompatsiaris.
\newblock Applying fairness constraints on graph node ranks under
  personalization bias.
\newblock In \emph{International Conference on Complex Networks and Their
  Applications}, pages 610--622. Springer, 2020{\natexlab{a}}.

\bibitem[Krasanakis et~al.(2020{\natexlab{b}})Krasanakis, Schinas,
  Papadopoulos, Kompatsiaris, and Symeonidis]{krasanakis2020boosted}
Emmanouil Krasanakis, Emmanouil Schinas, Symeon Papadopoulos, Yiannis
  Kompatsiaris, and Andreas Symeonidis.
\newblock Boosted seed oversampling for local community ranking.
\newblock \emph{Information Processing \& Management}, 57\penalty0
  (2):\penalty0 102053, 2020{\natexlab{b}}.

\bibitem[Kulis et~al.(2009)Kulis, Basu, Dhillon, and Mooney]{kulis2009semi}
Brian Kulis, Sugato Basu, Inderjit Dhillon, and Raymond Mooney.
\newblock Semi-supervised graph clustering: a kernel approach.
\newblock \emph{Machine learning}, 74\penalty0 (1):\penalty0 1--22, 2009.

\bibitem[Lancichinetti et~al.(2009)Lancichinetti, Fortunato, and
  Kert{\'e}sz]{lancichinetti2009detecting}
Andrea Lancichinetti, Santo Fortunato, and J{\'a}nos Kert{\'e}sz.
\newblock Detecting the overlapping and hierarchical community structure in
  complex networks.
\newblock \emph{New Journal of Physics}, 11\penalty0 (3):\penalty0 033015,
  2009.

\bibitem[Leskovec and Krevl(2014)]{snapnets}
Jure Leskovec and Andrej Krevl.
\newblock {SNAP Datasets}: {Stanford} large network dataset collection.
\newblock \url{http://snap.stanford.edu/data}, June 2014.

\bibitem[Leskovec and Mcauley(2012)]{leskovec2012learning}
Jure Leskovec and Julian~J Mcauley.
\newblock Learning to discover social circles in ego networks.
\newblock In \emph{Advances in neural information processing systems}, pages
  539--547, 2012.

\bibitem[Leskovec et~al.(2007)Leskovec, Adamic, and
  Huberman]{leskovec2007dynamics}
Jure Leskovec, Lada~A Adamic, and Bernardo~A Huberman.
\newblock The dynamics of viral marketing.
\newblock \emph{ACM Transactions on the Web (TWEB)}, 1\penalty0 (1):\penalty0
  5, 2007.

\bibitem[Leskovec et~al.(2009)Leskovec, Lang, Dasgupta, and
  Mahoney]{leskovec2009community}
Jure Leskovec, Kevin~J Lang, Anirban Dasgupta, and Michael~W Mahoney.
\newblock Community structure in large networks: Natural cluster sizes and the
  absence of large well-defined clusters.
\newblock \emph{Internet Mathematics}, 6\penalty0 (1):\penalty0 29--123, 2009.

\bibitem[Leskovec et~al.(2010)Leskovec, Lang, and
  Mahoney]{leskovec2010empirical}
Jure Leskovec, Kevin~J Lang, and Michael Mahoney.
\newblock Empirical comparison of algorithms for network community detection.
\newblock In \emph{Proceedings of the 19th international conference on World
  wide web}, pages 631--640. ACM, 2010.

\bibitem[Lyu(2020)]{lyu2020convergence}
Hanbaek Lyu.
\newblock Convergence of block coordinate descent with diminishing radius for
  nonconvex optimization.
\newblock \emph{arXiv preprint arXiv:2012.03503}, 2020.

\bibitem[Musco(2016)]{Musco2016}
Vincenzo Musco.
\newblock {Dataset for Generative Model of Software Dependency Graphs}.
\newblock 4 2016.
\newblock \doi{10.5281/zenodo.49665}.
\newblock URL
  \url{https://zenodo.figshare.com/articles/dataset/Dataset_for_Generative_Model_of_Software_Dependency_Graphs/6325993}.

\bibitem[Namata et~al.(2012)Namata, London, Getoor, Huang, and
  EDU]{namata2012query}
Galileo Namata, Ben London, Lise Getoor, Bert Huang, and UMD EDU.
\newblock Query-driven active surveying for collective classification.
\newblock In \emph{10th International Workshop on Mining and Learning with
  Graphs}, volume~8, 2012.

\bibitem[Ntoutsi et~al.(2020)Ntoutsi, Fafalios, Gadiraju, Iosifidis, Nejdl,
  Vidal, Ruggieri, Turini, Papadopoulos, Krasanakis, et~al.]{ntoutsi2020bias}
Eirini Ntoutsi, Pavlos Fafalios, Ujwal Gadiraju, Vasileios Iosifidis, Wolfgang
  Nejdl, Maria-Esther Vidal, Salvatore Ruggieri, Franco Turini, Symeon
  Papadopoulos, Emmanouil Krasanakis, et~al.
\newblock Bias in data-driven artificial intelligence systems—an introductory
  survey.
\newblock \emph{Wiley Interdisciplinary Reviews: Data Mining and Knowledge
  Discovery}, 10\penalty0 (3):\penalty0 e1356, 2020.

\bibitem[Ortega et~al.(2018)Ortega, Frossard, Kova{\v{c}}evi{\'c}, Moura, and
  Vandergheynst]{ortega2018graph}
Antonio Ortega, Pascal Frossard, Jelena Kova{\v{c}}evi{\'c}, Jos{\'e}~MF Moura,
  and Pierre Vandergheynst.
\newblock Graph signal processing: Overview, challenges, and applications.
\newblock \emph{Proceedings of the IEEE}, 106\penalty0 (5):\penalty0 808--828,
  2018.

\bibitem[Papadopoulos et~al.(2012)Papadopoulos, Kompatsiaris, Vakali, and
  Spyridonos]{papadopoulos2012community}
Symeon Papadopoulos, Yiannis Kompatsiaris, Athena Vakali, and Ploutarchos
  Spyridonos.
\newblock Community detection in social media.
\newblock \emph{Data Mining and Knowledge Discovery}, 24\penalty0 (3):\penalty0
  515--554, 2012.

\bibitem[Peel et~al.(2017)Peel, Larremore, and Clauset]{peel2017ground}
Leto Peel, Daniel~B Larremore, and Aaron Clauset.
\newblock The ground truth about metadata and community detection in networks.
\newblock \emph{Science advances}, 3\penalty0 (5):\penalty0 e1602548, 2017.

\bibitem[Rahman et~al.(2019)Rahman, Surma, Backes, and
  Zhang]{rahman2019fairwalk}
Tahleen~A Rahman, Bartlomiej Surma, Michael Backes, and Yang Zhang.
\newblock Fairwalk: Towards fair graph embedding.
\newblock In \emph{IJCAI}, pages 3289--3295, 2019.

\bibitem[Rossi and Ahmed(2015)]{nr}
Ryan~A. Rossi and Nesreen~K. Ahmed.
\newblock The network data repository with interactive graph analytics and
  visualization.
\newblock In \emph{AAAI}, 2015.
\newblock URL \url{http://networkrepository.com}.

\bibitem[Sandryhaila and Moura(2013)]{sandryhaila2013discrete}
Aliaksei Sandryhaila and Jos{\'e}~MF Moura.
\newblock Discrete signal processing on graphs: Graph filters.
\newblock In \emph{2013 IEEE International Conference on Acoustics, Speech and
  Signal Processing}, pages 6163--6166. IEEE, 2013.

\bibitem[Schaeffer(2007)]{schaeffer2007graph}
Satu~Elisa Schaeffer.
\newblock Graph clustering.
\newblock \emph{Computer science review}, 1\penalty0 (1):\penalty0 27--64,
  2007.

\bibitem[Shani and Gunawardana(2011)]{shani2011evaluating}
Guy Shani and Asela Gunawardana.
\newblock Evaluating recommendation systems.
\newblock In \emph{Recommender systems handbook}, pages 257--297. Springer,
  2011.

\bibitem[Spielman(2012)]{spielman2012spectral}
Daniel Spielman.
\newblock Spectral graph theory.
\newblock In \emph{Combinatorial scientific computing}, number~18. Citeseer,
  2012.

\bibitem[Spielman and Teng(2004)]{spielman2004nearly}
Daniel~A Spielman and Shang-Hua Teng.
\newblock Nearly-linear time algorithms for graph partitioning, graph
  sparsification, and solving linear systems.
\newblock In \emph{Proceedings of the thirty-sixth annual ACM symposium on
  Theory of computing}, pages 81--90, 2004.

\bibitem[Tang et~al.(2008)Tang, Zhang, Yao, Li, Zhang, and
  Su]{tang2008arnetminer}
Jie Tang, Jing Zhang, Limin Yao, Juanzi Li, Li~Zhang, and Zhong Su.
\newblock Arnetminer: extraction and mining of academic social networks.
\newblock In \emph{Proceedings of the 14th ACM SIGKDD international conference
  on Knowledge discovery and data mining}, pages 990--998. ACM, 2008.

\bibitem[Tong et~al.(2006)Tong, Faloutsos, and Pan]{tong2006fast}
Hanghang Tong, Christos Faloutsos, and Jia-Yu Pan.
\newblock Fast random walk with restart and its applications.
\newblock In \emph{Sixth international conference on data mining (ICDM'06)},
  pages 613--622. IEEE, 2006.

\bibitem[Tsioutsiouliklis et~al.(2020)Tsioutsiouliklis, Pitoura, Tsaparas,
  Kleftakis, and Mamoulis]{tsioutsiouliklis2020fairness}
Sotiris Tsioutsiouliklis, Evaggelia Pitoura, Panayiotis Tsaparas, Ilias
  Kleftakis, and Nikos Mamoulis.
\newblock Fairness-aware link analysis.
\newblock \emph{arXiv preprint arXiv:2005.14431}, 2020.

\bibitem[Wang et~al.(2017)Wang, Feng, Meng, Chen, Yu, and Zhang]{wang2017multi}
Huibing Wang, Lin Feng, Xiangzhu Meng, Zhaofeng Chen, Laihang Yu, and Hongwei
  Zhang.
\newblock Multi-view metric learning based on kl-divergence for similarity
  measurement.
\newblock \emph{Neurocomputing}, 238:\penalty0 269--276, 2017.

\bibitem[Wang et~al.(2013)Wang, Wang, Li, He, Chen, and
  Liu]{wang2013theoretical}
Yining Wang, Liwei Wang, Yuanzhi Li, Di~He, Wei Chen, and Tie-Yan Liu.
\newblock A theoretical analysis of ndcg ranking measures.
\newblock In \emph{Proceedings of the 26th annual conference on learning theory
  (COLT 2013)}, volume~8, page~6, 2013.

\bibitem[Whang et~al.(2016)Whang, Gleich, and Dhillon]{whang2016overlapping}
Joyce~Jiyoung Whang, David~F Gleich, and Inderjit~S Dhillon.
\newblock Overlapping community detection using neighborhood-inflated seed
  expansion.
\newblock \emph{IEEE Transactions on Knowledge and Data Engineering},
  28\penalty0 (5):\penalty0 1272--1284, 2016.

\bibitem[Wu et~al.(2012)Wu, Li, So, Wright, and Chang]{wu2012learning}
Xiao-Ming Wu, Zhenguo Li, Anthony~M So, John Wright, and Shih-Fu Chang.
\newblock Learning with partially absorbing random walks.
\newblock In \emph{Advances in neural information processing systems}, pages
  3077--3085, 2012.

\bibitem[Xie et~al.(2013)Xie, Kelley, and Szymanski]{xie2013overlapping}
Jierui Xie, Stephen Kelley, and Boleslaw~K Szymanski.
\newblock Overlapping community detection in networks: The state-of-the-art and
  comparative study.
\newblock \emph{Acm computing surveys (csur)}, 45\penalty0 (4):\penalty0 43,
  2013.

\bibitem[Yang and Stoyanovich(2017)]{yang2017measuring}
Ke~Yang and Julia Stoyanovich.
\newblock Measuring fairness in ranked outputs.
\newblock In \emph{Proceedings of the 29th International Conference on
  Scientific and Statistical Database Management}, pages 1--6, 2017.

\bibitem[Yoshida(2019)]{yoshida2019cheeger}
Yuichi Yoshida.
\newblock Cheeger inequalities for submodular transformations.
\newblock In \emph{Proceedings of the Thirtieth Annual ACM-SIAM Symposium on
  Discrete Algorithms}, pages 2582--2601. SIAM, 2019.

\bibitem[Zafar et~al.(2019)Zafar, Valera, Gomez-Rodriguez, and
  Gummadi]{zafar2019fairness}
Muhammad~Bilal Zafar, Isabel Valera, Manuel Gomez-Rodriguez, and Krishna~P
  Gummadi.
\newblock Fairness constraints: A flexible approach for fair classification.
\newblock \emph{J. Mach. Learn. Res.}, 20\penalty0 (75):\penalty0 1--42, 2019.

\bibitem[Zehlike et~al.(2017)Zehlike, Bonchi, Castillo, Hajian, Megahed, and
  Baeza-Yates]{zehlike2017fa}
Meike Zehlike, Francesco Bonchi, Carlos Castillo, Sara Hajian, Mohamed Megahed,
  and Ricardo Baeza-Yates.
\newblock Fa* ir: A fair top-k ranking algorithm.
\newblock In \emph{Proceedings of the 2017 ACM on Conference on Information and
  Knowledge Management}, pages 1569--1578, 2017.

\end{thebibliography}

\end{document}